\newtheorem{theorem}{Theorem}
\newtheorem{lemma}{Lemma}
\newtheorem{proposition}{Proposition}
\newtheorem{proof}{Proof}
\begin{document}

%
\title{ Distributed Load Shedding for Microgrid with Compensation Support via Wireless Network }
%

\author{\IEEEauthorblockN{~Qimin~Xu,~Bo~Yang,~Cailian~Chen,~Feilong~Lin,~Xinping~Guan }

\IEEEauthorblockA{Department of Automation, Shanghai Jiao Tong University, Shanghai, China\\
Collaborative Innovation Center for Advanced Ship and Deep-Sea Exploration, Shanghai, China\\
Key Laboratory of System Control and Information Processing, Ministry of Education of China, Shanghai, China} \\
Email: \{qiminxu, bo.yang, cailianchen, bruce\_lin, xpguan\}@sjtu.edu.cn}


%


\maketitle

\begin{abstract}
Due to the limited generation and finite inertia, microgrids suffer from a large frequency and voltage deviation which can lead to system collapse. {Thus, reliable load shedding method is required to maintain the frequency stability.} Wireless network, benefiting from the high flexibility and low deployment cost, is considered as a promising technology for fine-grained management. In this paper, a distributed load shedding solution via wireless network is proposed for balancing the supply-demand and reducing the load-shedding amount.
Firstly, real-power coordination of different priority loads is formulated as an optimisation problem. { To solve this problem, a distributed load shedding algorithm based on subgradient method (DLSS) is developed for gradually shedding loads.}
Using this method, power compensation can be utilised and has more time to decrease the power deficit, consequently reducing the load-shedding amount.
{Secondly, a multicast metropolis schedule based on TDMA (MMST) is developed.
In this protocol, time slots are dedicatedly allocated to increase the response rate.
A checking and retransmission mechanism is utilised to enhance the reliability of our method.}
Finally, the proposed solution is evaluated by NS3-Matlab co-simulator. The numerical results demonstrate the feasibility and effectiveness of our solution.\end{abstract}
\IEEEpeerreviewmaketitle

\section{Introduction}

Due to the depleting fossil fuel resources, rising energy costs, and deteriorating environmental conditions, more distributed energy resource (DER) units are incorporated into the current electrical power system. Microgrids are developed to interconnect the DER units in a relatively small area.
However, there exist several technical challenges in integrating DER units due to the nature of microgrids, such as limited generation, finite inertia and distributed structure.
Thus, determining how to monitor and manage the numerous DER units and loads is a critical issue, especially when the load and generation drastically change or faults happen. In this context, restoration is the typical operation to keep the supply-demand balance of the system by load shedding or generator power regulation.

Various kinds of restoration methods have been proposed to shed the appropriate loads using different methodologies \cite{Laghari2013Application,Gao2016Dynamic,hong2013multiscenario}. In \cite{Gao2016Dynamic} and \cite{hong2013multiscenario}, centralised methods were designed to coordinate multiple generators and loads in a microgrid. However, centralised methods need the collection of global information, and they easily suffer from single point failure.
Besides, centralised methods are ill suited to the structural nature of microgrids.
Thus, {distributed methods have been developed to address the above problems. }
Multi-agent system (MAS) based methods were proposed for reliable load shedding of microgrids \cite{Distributed2014Lim} and restoration of the microgrid in all-electric ship \cite{Multiagent2007Huang}.
These two algorithms were designed for the power system with specific structures.
Additionally, the restoration decision requires sophisticated coordination and information exchange between different agents, while the convergence and stability of the proposed algorithms have not been rigorously analysed.
To overcome these shortcomings, consensus based methods were applied to this problem \cite{zhao2013optimal,liang2012multiagent,xu2011stable,liu2015improved,gu2014adaptive}.
In \cite{zhao2013optimal}, an optimal load control scheme is designed based on power system model to reduce the mismatch between load and generation which is caused by sudden generation drop.
In \cite{liang2012multiagent,xu2011stable}, global information discovery (GID) algorithms for load shedding were proposed based on different consensus methods.
In \cite{liu2015improved}, a two-layer improved average consensus algorithm was designed for load shedding, which took cost and marginal cost into considerations.
In \cite{gu2014adaptive}, a decentralised under frequency load shedding (UFLS) was implemented based on the global information.
{These two works evaluated power deficiency by the rate of change of frequency (ROCOF) only at first frequency threshold which is a semi-adaptive scheme.}
However, they only shed the corresponding load amount, without consideration of mitigating the impact of load shedding on customer's experience. {The high pervasive smart meters and appliance with automatically sense and control function can be available in the future \cite{taneja2012defining}. Thus, more fine-grain load management can be realised by the collaboration of smart homes/buildings and worth further investigation.}

The conventional method for load shedding based on the ROCOF can estimate power deficit, but cannot obtain more load information such as load priority and economy. Thus, for more fine-grain load management, utilising advanced information and communication technology is necessary.
In the former works \cite{zhao2013optimal,xu2011stable,liu2015improved,gu2014adaptive}, the ideal communication model was employed. However, as for the load shedding operation performance, it is not only determined by the control algorithm but also related to the protocol design of communication system \cite{Parikh2010Power}.
{Compared with wireline networks, wireless networks bring the benefits of high flexibility, low-cost deployment, and widespread access, which are suitable for microgrids with numerous distributed DER units and loads.}
Thus, wireless networks, such as wireless LAN and LTE, are potential technologies to realise intelligent management.
The round-robin polling mechanism based time-division multiple access (TDMA) is a considerable protocol for wireless access in microgrids \cite{yang2011communication}.
For distributed coordination and fast convergence, several protocols were designed based on a unicast mode to coordinate agents in microgrids \cite{liang2012multiagent}. These protocols are only proposed for the GID. Additionally, the packet loss and hidden terminal problem are not taken into consideration.
Therefore, since load shedding method is time-sensitive, the protocol considering time efficiency and transmission reliability is urgently needed.

{In this paper, a fully distributed load shedding solution is proposed, which aims to improve customer's experience by fine-grain load management with reliable communication protocol.}
The main contributions of this paper are as follows:
\begin{itemize}
\item{}
Considering that the distributed management of small-scale microgrid contains loads with different priorities, the real power of loads are coordinated by utilisation level. Thus, a load priority associated optimisation problem that aims at maximising the weighted sum of the remained loads and balancing the supply and demand is formulated, which has a non-smooth objective.

\item{}
For reducing the impact of load shedding on customer's experience, a DLSS method is proposed to shed loads gradually by the frequency deviation rather than at a fixed number of steps and fixed load-shedding amount. Hence, power compensation can be utilised to reduce the load-shedding amount. Moreover, the relevant analysis of convergence is presented.

\item{}
A multicast metropolis schedule based on TDMA (MMST) is developed to increase the response rate and guarantee the reliability of DLSS method. In this protocol, a time slot allocation algorithm is designed to increase the number of concurrent transmission between agents, and a data frame structure piggybacks the checking information of packets received from neighbour agents.
\end{itemize}

The paper is organized as follows. In Section \ref{sec:system_structure}, the system structure is introduced. Section \ref{sec:load_shedding_solution} presents in detail the distributed load shedding solution. In Section \ref{sec:protocol_design}, the proposed MMST protocol is elaborated.
The performance of the proposed solution is evaluated and compared with the existing methods in Section \ref{sec:simulation}. Finally, the conclusion is drawn in Section \ref{sec:conclusion}.

\section{System Structure}
\label{sec:system_structure}

In this research, we consider a load shedding problem in a microgrid.
The microgrid network is denoted by a graph $ ( \mathcal{N}, \mathcal{E}) $. $\mathcal {N}$ denotes the bus set which is defined as $\mathcal{N} = \{1, \cdots , N \} =  \mathcal{N}_{cg} \cup \mathcal{N}_{ig} $. $ \mathcal{N}_{cg} $ and $ \mathcal{N}_{ig} $ are the bus sets connected with conventional distributed generators (DG) and inverter-based DGs respectively.
$\mathcal{E} \subseteq \mathcal{N} \times \mathcal{N}$ denotes the set of transmission line interconnecting the buses. In the microgird, there is at least one bus connected with synchronous generator (SG) or energy storage system (ESS), which can be used for power compensation.

\textbf{Assumption:} We make the following assumptions:
\begin{itemize}
\item Each bus is managed by an agent which is regarded as a regional controller. The agents communicate with each other via wireless networks.
\item Each agent has the information of global maximum generation {  capacity. But they do not have the real-time power generation and load demand of other buses.}
\item The communication range of each agent only covers its neighbour agents, since the communication network of microgrids has low density.
\end{itemize}

{  \subsection{Generation Model and Multi-Priority Load Model}
\label{sec:generation_and_demand}
 }

The total power generation $P_G$ in the microgrid can be obtained by
\begin{align}
\label{eqn:generation_model}
P_{G} = \sum_{i=1}^{N} P_{G_i},
\end{align}
where $ P_{G_i} $ denotes the power generation at bus $i$, $N$ denotes the number of buses (agents) in the microgrid.

The total real power demand $P_D$ and load model \cite{New2015Laghari} can be expressed as
\begin{align}
\label{eqn:demand_model}
& P_{D} = \sum_{i=1}^{N} P_{L_{i}} + P_{loss}  = \sum_{i=1}^{N} \sum_{l=1}^{N_{L,i}} b_{i,l} P_{L_{i,l}} + P_{loss}, \\
\label{eqn:load_model}
& P_{L_{i,l}} = P_{L_{i,l}}(0) (1 + \kappa_f \Delta f + \kappa_v \Delta V),
\end{align}
where $P_{L_{i,l}}(0)$ denote the real power of load $l$ at base frequency and voltage, and $P_{L_{i,l}}$ at new voltage and frequency. $\Delta f$ and $\Delta V$ denote the deviation of system frequency and voltage, respectively. $\kappa_f$ and $\kappa_v$ are the coefficients of real power load dependency on frequency and voltage respectively. $b_{i,l}$ represents the control variable of load $l$ at bus $i$. Thus, $\bm b_i$ is an array of length $N_{L,i}$ (1/0 = active/non-active), where $N_{L,i}$ denotes the number of loads at bus $i$.
$P_{loss}$ is the real power loss in transmission line.

The real power of the total loads $P_L^{\max}$ {when faults happen} can be calculated as
\begin{align}
 P_{L}^{\max} = \sum_{i=1}^{N} P_{L_i}^{\max} = \sum_{i=1}^{N} \sum_{l=1}^{N_{L,i}} P_{L_{i,l}}(0),
\end{align}
where $P_{L_i}^{\max}$ denotes the total power of the loads at bus $i$ when they are all in active status.

The loads are divided into $G$ grades according to the economic and social influence caused by load interruption. $G$ denotes the maximum load grade, and $G =3 $ in most cases. The vital load is the uninterruptible power-supplied load, which would cause great economic losses, and even casualty if interrupted. The second grade load would cause certain economic losses if interrupted. The nonvital load is the third grade load which can be adjusted. Hence $ P_{L}^{\max}$ can also be represented by
\begin{align}
\label{eqn:grade_load}
P_{L}^{\max} = \sum_{g=1}^{G} \rho_g P_{{L}}^{\max} = \sum_{i=1}^{N} \sum_{g=1}^{G} \rho_{g,i} P_{L_{i}}^{\max},
\end{align}
where
$\rho_g$ is the ratio of the $g$-th grade loads in the real power of the total loads $P_L^{\max}$, and $\rho_{g,i}$ denotes the ratio of the $g$-th grade loads in the total real power at bus $i$ .

For different priority loads, {$w_{g}$ }denotes the weight factor of the $g$-th loads, which is used to set a measurable indicator of load shedding and ensure that the lower priority loads are shed first.
The smaller $g$ is, the higher priority the loads have.
{Thus, according to (\ref{eqn:grade_load}), the weighted sum of all the load power is written as
}
\begin{equation}
\label{eqn:total_weight}
{P_{W_t}} = \sum_{g=1}^G w_{g} \rho_g P_{L}^{\max},
\end{equation}
where $w_{g}$ decreases with the increase of load priority.

{For the load shedding problem, $P_{L_i}$ is the adjustable variable, which denotes the remained power of loads at bus $i$.
The utilization level $u_i$ is used to coordinate power of loads at each bus, which is defined as}
\begin{equation}
\label{eqn:utilization_level}
\begin{aligned}
u_i & =\dfrac{ P_{L_i}} {P_{L_i}^{\max}} = \dfrac{\sum_{l=1}^{N_{L,i}} b_{i,l} P_{L_{i,l}}(0)} {P_{L_i}^{\max}} \\
& + \dfrac{ \sum_{l=1}^{N_{L,i}} b_{i,l} ({\kappa_f \Delta f + \kappa_v \Delta V}) P_{L_{i,l}}(0) } {P_{L_i}^{\max}} , \ P_{L_i} \in [0,P_{L_i}^{\max} ].
\end{aligned}
\end{equation}

{  \subsection{ Power Deficit and Load-Shedding Amount Formulation} }

The power deficit $\Delta \tilde{P}$ can be estimated based on the ROCOF.
If initial power deficit $ \Delta \tilde{P} $ caused by the fault is in the range as follow
\begin{equation}
\label{eqn:shedding_load_range}
\sum_{g=m+1}^G \rho_g P_{L}^{\max} < \Delta \tilde{P} \leqslant \sum_{g=m}^G \rho_g P_{L}^{\max},
\end{equation}
the corresponding total weighted sum of load power that need to be shed $P_{W_{\Delta}}$ can be expressed as
\begin{equation}
\label{eqn:delta_weight}
\begin{aligned}
{P_{W_{\Delta}}} & =  \sum_{g=m+1}^{G}  w_{g} \rho_g P_{L}^{\max}    + w_{{m}} \left(\Delta \tilde P - \sum_{g=m+1}^G \rho_g P_{L}^{\max} \right).
\end{aligned}
\end{equation}

Similarly, the weighted sum of the remained load power $P_{W_{i}}$ at bus $i$ based on (\ref{eqn:grade_load}) and (\ref{eqn:utilization_level}) can be calculated as
\begin{equation}
\begin{aligned}
\label{eqn:each_weight}
P_{W_{i}} =
& \sum_{g=1}^{m} w_{g}\rho_{g,i} P_{L_{i}}^{\max}  + w_{{m+1}} \left(u_i - \sum_{g=1}^m \rho_{g,i} \right) P_{L_{i}}^{\max}, \\
& \text{if} \ u_i
\in \left( \sum_{g=1}^{m} \rho_{g,i}, \ \sum_{g=1}^{m+1} \rho_{g,i} \right].
\end{aligned}
\end{equation}

The left side of Fig. \ref{fig:load_diagram} shows the relationship between different priority loads, $P_{W_{t}}$ and $P_{W_{\Delta}}$
in (\ref{eqn:total_weight}) and (\ref{eqn:delta_weight}). Part \textcircled{1} and part \textcircled{2} represent the two terms of (\ref{eqn:delta_weight}) respectively. The right side illustrates the relationship between different priority loads and utilization level $u_i$ at bus $i$ in (\ref{eqn:each_weight}). Part \textcircled{3} and part \textcircled{4} represent the two terms of (\ref{eqn:each_weight}) respectively.
{Thus, the objective of load shedding in this work is to satisfy
\begin{equation}
\label{eqn:load_balance}
 P_{W_t} - P_{W_{\Delta}} = \sum_{i=1}^N P_{W_i}
\end{equation}
where the right term is estimated based on system information and ROCOF, and the left is adjust variable.}
\begin{figure}[ht]
\vspace{-2ex}
\begin{center}
\setlength{\belowcaptionskip}{-0.2cm}
\includegraphics[width = 0.46 \textwidth]{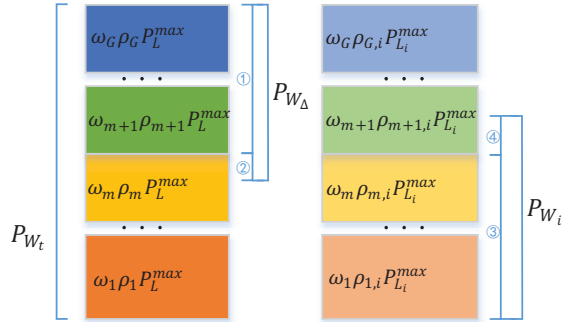}
\caption{{ Multi-Priority Load Diagram.}}
\captionsetup{justification=centering}
\label{fig:load_diagram}
\end{center}
\vspace{-2ex}
\end{figure}


\section{Distributed Load Shedding Solution }
\label{sec:load_shedding_solution}

In this section, the distributed load shedding solution is introduced, which is shown in Fig. \ref{fig:solution_figure}. When fault causes overload, such as islanding and generation loss, the system starts load shedding process. Due to the load shedding method depending on the operating information of the microgrid, a GID is executed in the first stage.
In normal condition, this operation runs periodically. The GID algorithm and its adopted communication protocol determine the minimum convergence time $T_{gi}$.
Once the system frequency is lower than the trigger frequency $f_{tr}$, a GID process is executed.
{When the global information is obtained, the DLSS method is carried out at each agent after a time delay $t_{ad}$, which disconnects loads gradually with consideration of load priority.} This process is ended when the power balance is achieved, which consumes time $T_{ls}$. Considering that the frequency may drop to the unsafe range before the convergence of DLSS process, a safety threshold shedding is utilised.The proposed solution is detailed as follow.

\begin{figure}[htbp]
\centering
\vspace{-10pt}
\subfigure[]{\label{fig:solution_process}
\includegraphics[width=0.42 \textwidth]{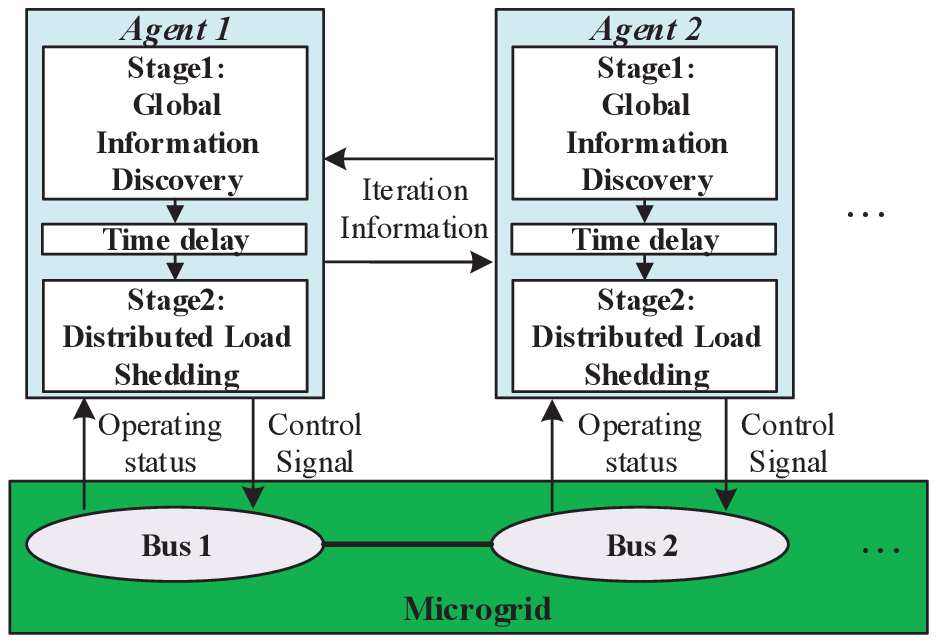}}
\subfigure[]{\label{fig:frequency_regulation}
\includegraphics[width = 0.42 \textwidth]{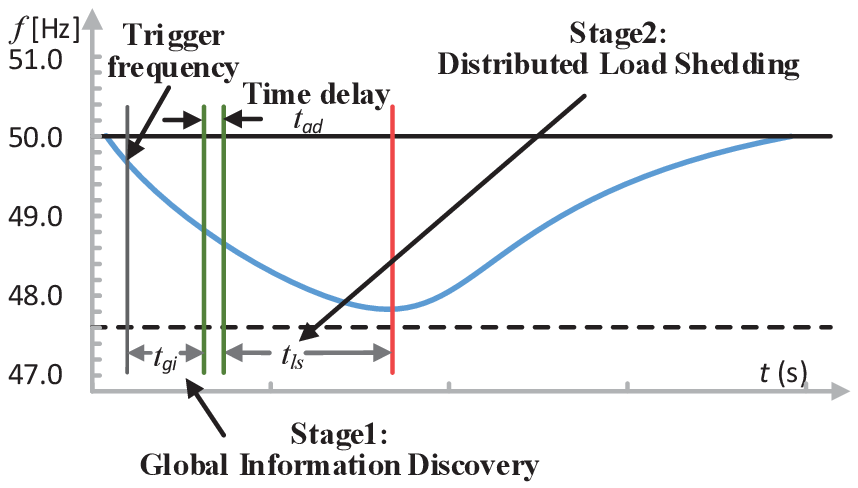}}
\caption{ {(a) Distributed load shedding operation process. (b) Frequency regulation.} }
\label{fig:solution_figure}
\end{figure}

\subsection{Global Information Discovery}
\label{subsec:global_info_discovery}

In this research, the global information of the microgrid that needs to be discovered includes three types: load information $P_L$, $\rho_g$, system information $f$, $V$, and power deficit $\Delta P$. Agents obtain the global information ${X}$ by average consensus method which only needs that each agent exchanges data with directly connected agents. This method can improve the estimation perfermance by reducing the measurement noise and oscillation of frequency \cite{zhao2013optimal}. The average consensus of local information $x_i$ will converge to the common value $ \bar{X} $ which is expressed as
\begin{equation}
\label{eqn:average_consensus}
\begin{aligned}
& \bar{X} = \dfrac{1}{N} \sum_{i \in \mathcal{N}} {x_i^{(0)}} \\
& {X} = N \bar{X}.
\end{aligned}
\end{equation}
{where $x_i^{(0)}$ is the inital value of local information $x_i$.}

Estimation of power deficit $\Delta P$ is the key point to determine the magnitude of shedding loads in the microgrid.
Due to the two types of generators, the estimations of $\Delta P_i$ are different.
$ \Delta P_i $ denotes the real power deficit at bus $i$.

{\textbf{Conventional DGs:}} For conventional DGs without inverters, the magnitude of the power deficit can be estimated by
\begin{equation}
\label{eqn:dg_rocof}
\Delta P_i =  \dfrac{2 H_{cg,i}}{f_{no}} \dfrac{d f_{cg,i}}{d t}, \ i \in \mathcal{N}_{cg},
\end{equation}
where the inertia constant $H_{cg,i}$ of conventional DGs are deterministic. The ROCOF ${d f_{cg,i}}/{d t}$ is measured when the imbalance of the real power occurs. $f_{no}$ is the base frequency.

{\textbf{Inverter-based DGs:}} The energy sources that connected to the microgrid system with inverters have little contribution to the system inertia, such as photovoltaics (PV). Hence, the power deficit of the inverter-based DG is estimated by droop control characteristic. The relationship of real power and frequency is similar to the conventional DGs. The magnitude of the power deficit of the inverter-based DG can be calculated by:
\begin{equation}
\label{eqn:idg_rocof}
\Delta P_{i} = \dfrac{2 \pi (f_{ig,i} - f_{no})}{\xi_i} = \dfrac{2 \pi \Delta f_{{ig},i}}{\xi_i}, \ i \in \mathcal{N}_{ig},
\end{equation}
where $\Delta f_{ig,i}$ denotes the measured frequency deviation of inverter-based DG $i$
and $\xi_i$ is the droop coefficient.

{ $T_{gi}$ determines the response rate of load shedding process according to Fig. \ref{fig:solution_figure}.} To minimize the $T_{gi}$, the MMST protocol is designed, which is introduced in section \ref{sec:protocol_design}.

\subsection{ Distributed Load Shedding Algorithm Based on Subgradient Method}
\label{sec:shedding_solution}

{Generator compensation is utilised with the load shedding process to reduce the load-shedding amount.
Hence, the real power deficit is decreased by load shedding and generator compensation together. Generator gradually increases the real power to compensate the power deficiency.
However, the speed of the generator compensation depends on the generating unit type and is usually unchangeable. Thus, prolonging the time to the unsafe range by gradually load shedding can give more time for the generator compensation.
The more power the generators compensate, the fewer loads the system disconnects. Therefore, a DLSS method is proposed for collaborative operation with generator compensation.}

For better quality of customer experience, the objective is to maximise the weighted sum of loads.
To make the problem tractable, the objective is transformed into minimizing deviation between the current weighted sum of loads $\sum_{i=1}^N P_{W_i}$ and the prediction weighted sum of loads after compensation $(P_{W_t} - P_{W_{\Delta}})$ according to (\ref{eqn:load_balance}).
Since the balance between power supply and demand is the basic requirement, the problem is formulated as follows
\begin{subequations}
\label{eqn:objective_constraint}
\begin{align}
\label{eqn:objective} \underset{\bm{u} } {\mathrm{min}} \quad & F( \bm{u} )= \left(  P_{W_t} - P_{W_{\Delta}} - \sum_{i=1}^N P_{W_i} \right)^2 \\
\label{eqn:constraint} \text{s.t.} \quad & \left( \sum_{i=1}^N (u_i P_{L_i}^{\max} )+ P_{loss} - P_{G} \right)^2 \leqslant \varepsilon, \\
\label{eqn:gen_constraint} & P_{G,i}^{\min} \leqslant P_{G,i} \leqslant P_{G,i}^{\max}, \\
 & {(\ref{eqn:generation_model})-(\ref{eqn:utilization_level}), (\ref{eqn:delta_weight}), (\ref{eqn:each_weight}), (\ref{eqn:dg_rocof}), (\ref{eqn:idg_rocof}) } \nonumber
\end{align}
\end{subequations}
where $\bm{u} = (u_1, \cdots, u_N)^{T} \in \mathcal{U} $, $\mathcal{U} $ the utilization level set, (\ref{eqn:constraint}) the power balance constraint between supply and demand, and $\varepsilon$ is the maximum error between power supply and demand.

Due to tens or hundreds of loads at each bus, the interval between adjoining points of $u_i$ is small to be around or below to one percent. Thus, $u_i$ can be linearised although it is a discrete variable.
It is noted that problem (\ref{eqn:objective_constraint}) is a convex optimization problem. In this paper, a distributed load shedding algorithm based on subgradient method is used to solve it, which is referred to\cite{xu2014distributed,chang2014distributed}.
We consider the Lagrange dual problem of (\ref{eqn:objective_constraint}):
\begin{equation}
\label{eqn:lagrange_dual}
\underset{ {\lambda} \geqslant 0} {\max} \left\{ \underset{\bm{u} } {\min \mathcal{L}} (\bm{u}, {\lambda}) \right\} ,
\end{equation}
where ${\lambda} $ is the dual variable associated with the inequality constraint (\ref{eqn:constraint}), which is non-negative. The Lagrange function is expressed as:
\begin{equation}
\label{eqn:lagrange_function}
 \mathcal{L} (\bm{u}, {\lambda}) = F \left(u_1, \cdots, u_N \right) + {\lambda}  J \left( u_1, \cdots, u_N  \right),
\end{equation}
where $J(\bm{u}) = ( \sum_{i=1}^N (u_i P_{L_i}^{\max}) + P_{loss} - P_{G} )^2 - \varepsilon $. $\mathcal{L}_{\bm{u}} \big(\bm{u}^{(k)}, {\lambda}^{(k)} \big)$ and $\mathcal{L}_{\lambda} \big(\bm{u}^{(k)}, {\lambda}^{(k)} \big)$ represent the subgradients of $ \mathcal{L} $ at $(\bm{u}^{(k)}, \lambda^{(k)})$ with respect to $\bm{u}$ and $ \lambda $ respectively, which are given by
\begin{equation}
\begin{aligned}
\label{eqn:largrange_u} & \mathcal{L}_{\bm{u}} \left(\bm{u}^{(k)}, {\lambda}^{(k)} \right) & \\
 & =
\left[ \begin{array}{c}
\mathcal{L}_{u_1} \Big(\bm{u}^{(k)}, {\lambda}^{(k)} \Big)  \\
\vdots\\
\mathcal{L}_{u_N} \Big(\bm{u}^{(k)}, {\lambda}^{(k)} \Big)  \\
\end{array} \right] & \\
&  =
\left[ \begin{array}{c}
{\nabla F(u_1^{(k)})} + \lambda^{(k)} {\nabla J(u_1^{(k)})} \\
\vdots\\
{\nabla F(u_N^{(k)})} + \lambda^{(k)} {\nabla J(u_N^{(k)})}
\end{array} \right],
\end{aligned}
\end{equation}
\begin{equation}
\begin{aligned}
\label{eqn:largrange_lambda} & \mathcal{L}_{{\lambda}} \left(\bm{\bm{u}}^{(k)}, {\lambda}^{(k)} \right) \\
& = \left( \sum_{i=1}^N \left( u_i^{(k)} P_{L_i}^{\max} \right) + P_{loss} - P_{G} \right)^2  - \varepsilon .
\end{aligned}
\end{equation}

Frequency and voltage deviation affect the consumed power of loads based on (\ref{eqn:utilization_level}). In addition, the voltage fluctuates in the load shedding process. Thus, the utilisation level for determining grade of shedding loads $u_i(f)$ takes into account of frequency deviation, which is written as
\begin{equation}
u_i(f) = u_i - \dfrac{ \sum_{l=1}^{N_{L,i}} b_{i,l} ({\kappa_f \Delta f}) } {P_{L_i}^{\max}}.
\end{equation}

Since $ {F( u_i^{(k)} )}$ is non-smooth, there are two subgradient at point $u_i = \sum_{g=1}^{m} \rho_{g,i}, 1 \leqslant m < G$. Hence the $ {\nabla F( u_i^{(k)}  )}$ is calculated by using (\ref{eqn:each_weight}) and (\ref{eqn:objective}), i.e.,
\begin{equation}
\begin{aligned}
\label{eqn:pd_W} {\nabla F \left( u_i^{(k)} \right) } & = -2P_{L_{i}}^{\max} w_{{m+1}} \left( P_{W_t} - P_{W_{\Delta}}- \sum_{i=1}^N P_{W_i}^{(k)} \right),  \\
& \text{if} \ u_i(f)
\in \left( \sum_{g=1}^{m} \rho_{g,i}, \ \sum_{g=1}^{m+1} \rho_{g,i} \right].
\end{aligned}
\end{equation}

The $ {\nabla J( u_i^{(k)} )} $  can be calculated by using (\ref{eqn:constraint}), and they can be expressed as
\begin{equation}
\begin{aligned}
\label{eqn:pd_P} {\nabla J( u_i^{(k)}  )}  & =  2P_{L_{i}}^{\max} \left(  P_{loss} + \sum_{i=1}^N u_i^{(k)}  P_{L_i}^{\max} - P_G  \right) .
\end{aligned}
\end{equation}

The four steps of the proposed DLSS is described in detail as follows:
\subsubsection{Global variable estimation}
{Each agent makes load shedding decision based on the global information $P_{W_i}$ and $\Delta P_i$, which cannot be obtained directly. Thus, two auxiliary variables denoted by ${W_i}^{(k)}$ and ${D_i}^{(k)}$ are added to estimate them respectively. $k$ is the updating index.}
The two auxiliary variables represent respectively the average estimates of the weighted sum of loads $1/N \sum_{i=1}^{N} {P_{W_i}}^{(k)} $ and of the power demand $1/N \left(  P_{loss} + \sum_{i=1}^N u_i^{(k)}  P_{L_i}^{\max} - P_G  \right) = 1/N \sum_{i=1}^N \Delta P_i^{(k)} $.
Due to the distributed feature of our algorithm, each agent has a copy of the dual variable $\lambda_i^{(k)}$ instead of $\lambda^{(k)}$.
$ u_i^{(k)}$ is the utilization level of agent $i$ at updating index $k$.
Each agent $i$ sends ${W_i}^{(k-1)}$, ${D_i}^{(k-1)}$, $ \lambda_i^{(k-1)}$, and $u_i^{(k-1)}$ to all the neighbour agents $j$ satisfying $ j \in \mathcal{N}_i$. $\mathcal{N}_i$ denotes the neighbour set of agent $i$. Each agent $i$ also receives ${W_i}^{(k-1)}$, ${D_i}^{(k-1)}$, $ \lambda_i^{(k-1)}$, and $u_i^{(k-1)}$ from its neighbour agents, and estimates the global variable based on the those data
\begin{subequations}
\label{eqn:local_state_update}
\begin{align}
& \tilde W_i^{(k)} = \sum_{j=1}^{N} {a_{ij} {W_j}^{(k-1)}},  \tilde D_i^{(k)} = \sum_{j=1}^{N} {a_{ij}} {D_j}^{(k-1)}, \\
& \tilde{\lambda}_{i}^{(k)} = \sum_{j=1}^{N} {a_{ij}} {\lambda}_j^{(k-1)},
\label{eqn:P_S_update1} \tilde u_{i}^{(k)} = \sum_{j=1}^{N} {a_{ij}} u_{j}^{(k-1)},
\end{align}
\end{subequations}
where $a_{ij}$ is the information exchange coefficient between agent $i$ and $j$. When $(i,j) \in \mathcal{E}$, $a_{ij}>0$ holds and $a_{ij}=0$ otherwise.
The $n$ dimensional transition matrix $A$ is composed of $a_{ij}$s. The transition matrix $A$ is a doubly stochastic matrix, which satisfies that $\sum_{j=1}^{N} a_{ij} = 1$ for all $i$ and $\sum_{i=1}^{N} a_{ij} = 1$ for all $j$. {$\tilde u_{i}^{(k)}$ denotes the estimated global utilization level of loads at agent $i$, which is used for load-shedding amount correction.}

\subsubsection{Primal-dual variable update}
Because function $F{(\bm{u})}$ is no-smooth, each agent $i$ updates its primal and dual variables $ (u_i^{(k)}, \lambda_i^{(k)})$ based on the estimated global variable $( \tilde W_i^{(k)}, \tilde D_i^{(k)}, \tilde{\lambda}_{i}^{(k)})$ as follow:
\begin{equation}
\begin{aligned}
\label{eqn:u_varible_update} {u_i}^{(k)} & = \left( {u_i}^{(k-1)} -  \tau_k \mathcal{L}_{u_i} \left( \bm{u}^{(k-1)}, \tilde {\lambda}_i^{(k)} \right) \right)^+ \\
& = \left( {u_i}^{(k-1)} -  2 \tau_k P_{L_{i}}^{\max} \left( {\lambda}_i^{(k)} N \tilde D_i^{(k)} \right. \right. \\
& \left. \left. \quad - w_{{m+1}} \big( P_{W_t} - P_{W_{\Delta}} - N \tilde W_i^{(k)} \big)  \right) \right)^+, \\
& \quad u_i(f)
\in  \left( \sum_{g=1}^{m} \rho_{g,i}, \ \sum_{g=1}^{m+1} \rho_{g,i} \right],
\end{aligned}
\end{equation}
\begin{equation}
\begin{aligned}
\label{eqn:lambda_varible_update} {\lambda_i}^{(k)}
= & \left( \tilde {\lambda}_i^{(k)}  + \tau_k \mathcal{L}_{{\lambda_i}} \left(\bm{\bm{u}}^{(k)}, {\tilde \lambda_i}^{(k)} \right) \right)^{+} \\
= & \Big( \tilde {\lambda}_i^{(k)}  + \tau_k \Big( \Big( N \tilde D_i^{(k)} \Big)^2  - \varepsilon \Big)^{+},
\end{aligned}
\end{equation}
where $\tau_k$ is the step size, $(u)^{+} = \max \{ u,0 \}$.

The shedding sequence of loads is sorted in an ascending order of weighted power $w_{g} P_{L_{i,l}}$. Then, the shedding control variable $\bm b_i$ can be determined by approximating the obtained $u_i$ based on (\ref{eqn:utilization_level}).


\subsubsection{Local variable update}
{When the load shedding decision is carried out, $W_i^{(k)}$ and $D_i^{(k)}$ need to be updated for the global information estimation in next iteration.}
Each agent $i$ updates variable ${W_i}^{(k)}$ and ${D_i}^{(k)}$ with the changes of the local argument functions $P_{W_i}^{(k)}$ and ${\Delta P_i}^{(k)}$,
\begin{align}
\label{eqn:auxiliary_1_update} & {W_i}^{(k)} = \tilde W_i^{(k)} + P_{W_i}^{(k)} - P_{W_i}^{(k-1)}, \\
\label{eqn:auxiliary_2_update} & {D_i}^{(k)} = \tilde D_i^{(k)} + \Delta P_i^{(k)} - \Delta P_i^{(k-1)}.
\end{align}

In order to reduce the jitter of utilization level, the initial value of ${W_i}^{(0)} $ and ${D_i}^{(0)}$ can be set to $P_{W_t}/N$ and $\sum_{i=1}^{N} \Delta P_i/N$ based on the obtained data in the GID.

\subsubsection{load-shedding amount correction }
The SG or ESS is controlled to generate real power to compensate for the deficiency in the process of load shedding. The compensation rate is determined by the specification and power control algorithm of generator or ESS, which is a relatively slower than load shedding.
The total loads that need to be shed $\Delta \tilde P$ is updated as follow:
\begin{align}
\label{eqn:P_S_update2} & \Delta \tilde P_{}^{(k)} = \Delta {P}^ {(k)} + (1 - \tilde u_i^{(k)})P_L^{\max},
\end{align}
where the first term of (\ref{eqn:P_S_update2}) is the current power deficit, and the second term is the sum of loads that have been shed. Thus the weighted sum of loads that need to be shed $P_{W_{\Delta}}$ can be updated based on (\ref{eqn:total_weight}) and (\ref{eqn:P_S_update2}).
Due to this correction mechanism, a dynamic load shedding method is realised.

The above steps of the DLSS method are summarised in Algorithm \ref{alg:DLSS}.

\begin{algorithm}[htb]
  \caption{Distributed Load Shedding Algorithm Based on Subgradient Method}
  \label{alg:DLSS}
  \begin{algorithmic}[1]
  \REQUIRE Initial variables ${u}_i^{(0)}$, $\lambda_i^{(0)}$, ${W_i}^{(0)}$, ${D_i}^{(0)}$, and ${\Delta P^{(0)} }$.
  \ENSURE variables $\bm{u}$.
    \STATE Set $k = 1$.
    \REPEAT
        \STATE Exchange $W_i^{(k)}$ and $D_i^{(k)}$ with its neighbour agents;
        \STATE Estimate the average local variables ${\tilde W_i}^{(k)} $, ${\tilde D_i}^{(k)}$ and $\tilde \lambda_i^{(k)} $ by (\ref{eqn:local_state_update});
        \STATE  Update primal and dual variables $ {u_i}^{(k)} $ and $ \lambda_i^{(k)} $ by (\ref{eqn:u_varible_update}) and (\ref{eqn:lambda_varible_update});
        \STATE  {Calculate $\bm b_i$ based on $u_i$ and $(\ref{eqn:utilization_level})$, and shed the corresponding loads;}
        \STATE  Update the local variables  $ {W_i}^{(k)} $ and ${ D_i}^{(k)}$  by (\ref{eqn:auxiliary_1_update}) and (\ref{eqn:auxiliary_2_update});
        \STATE  {Correct load-shedding amount $\Delta \tilde P^{(k)}$ based on (\ref{eqn:P_S_update1}) and (\ref{eqn:P_S_update2});
        }
        \STATE $k = k+1$;
    \UNTIL { Satisfy power balance constraint (\ref{eqn:constraint}) }
  \end{algorithmic}
\end{algorithm}

\subsection{ Safety threshold shedding }
In the gradual load shedding, the safety threshold of system frequency must be guaranteed. If the system frequency drops down to the lower safety threshold $f_{th}$ and the gradual load shedding has not been finished, the safety threshold shedding is executed immediately. The load-shedding amount $P_{{sh}_i}$ follows the rule according to \cite{zhong2005power} which is
\begin{equation}
\label{eqn:safety_threshole_shedding_original}
P_{{sh}_i} = \dfrac{\Delta f_i P_{L_i}}{\sum_{i \in \mathcal{N}} \Delta f_i P_{L_i}}  \Delta {\tilde P}^{(k)}
\end{equation}
where $\Delta f_i$ denotes the frequency deviation at bus $i$ compared to the base frequency. Considering the small deviation of $\Delta f_i$ between different buses in the microgrid, (\ref{eqn:safety_threshole_shedding_original}) can be written as
\begin{equation}
\label{eqn:safety_threshole_shedding}
P_{{sh}_i} =\dfrac{u_i P_{L_i}^{\max}}{ \tilde u_i P_L^{\max} }  \Delta {\tilde P}^{(k)}
\end{equation}


{To describe the relationship between the safety threshold shedding and other modules, a high-level logic overview is given in Fig. \ref{fig:logic_flow}, which also provides a complete description of the proposed solution.
After the convergence of the GID, the DLSS method is carried out after a time delay $t_{ad}$. Thus, the total time delay equals to $t_{gi} + t_{ad}$. Regardless of the convergence of the DLSS process, the safety threshold shedding is triggered if the frequency drops to $f_{th}$. }

\begin{figure}[ht]
\begin{center}
\vspace{-2ex}
\setlength{\belowcaptionskip}{-0.2cm}
\includegraphics[width = 0.42 \textwidth]{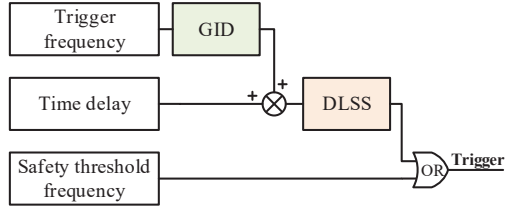}
\caption{{High level logic overview of the proposed solution.}}
\captionsetup{justification=centering}
\label{fig:logic_flow}
\end{center}
\vspace{-2ex}
\end{figure}

\subsection{Convergence Analysis of DLSS Method}

The precondition of balancing the power supply-demand is the convergence of DLSS algorithm which is analysed here.
It can be known that the convexity of function $J(\bm{u})$ implies that it has uniformly bounded subgradient, which is equivalent to $J(\bm{u})$ being Lipschitz continuous. Thus, based on (\ref{eqn:pd_W}), we have
\begin{equation}
\begin{aligned}
\label{eqn:J_condition1} \big\lVert {\nabla J( \bm{u})} \big\rVert
& =  \left\lVert \left[ \begin{array}{c}
{\nabla J(u_1^{(k)})}  \\
\vdots\\
{\nabla J(u_N^{(k)})}
\end{array} \right]
 \right\rVert \\
& \leqslant 2 \sqrt{N} \check P_{L_i} \Delta \tilde P, \; \forall  \bm{u} \in \mathcal{U} ,
\end{aligned}
\end{equation}
\begin{equation}
\begin{aligned}
\label{eqn:J_condition2}
& \big\lVert \nabla J \left(  \bm{u}\right)- \nabla J \left(  \bm{u}^{+} \right) \big\rVert \\
= &  \left\lVert \left[ \begin{array}{c}
{\nabla J(u_1^{(k)}) - \nabla J(u_1^{+(k)})}  \\
\vdots\\
{\nabla J(u_N^{(k)}) - \nabla J(u_2^{+(k)})}
\end{array} \right]
 \right\rVert \\
\leqslant & 2 \check P_{L_i}^2 \big\lVert \bm{u} - \bm{u}^{+} \big\rVert, \; \forall  \bm{u}, \bm{u}^{+} \in \mathcal{U},
\end{aligned}
\end{equation}
where $ \check P_{L_i} = \underset{ 1 \leqslant i \leqslant n} \max \{P_{L_i}^{\max}\}$.

Similarly, the convexity of function $F(\bm{u})$ implies that it has uniformly bounded subgradient. Based on (\ref{eqn:pd_P}), we have
\begin{equation}
\begin{aligned}
\label{eqn:F_condition1} \big\lVert {\nabla F( \bm{u})} \big\rVert
 & = \left\lVert \left[ \begin{array}{c}
{\nabla F(u_1^{(k)})}  \\
\vdots\\
{\nabla F(u_N^{(k)})}
\end{array} \right]
 \right\rVert \\
& \leqslant \ 2 \sqrt N \check P_{L_i} w_{{G}} P_{W_\Delta}, \forall  \bm{u} \in \mathcal{U} ,\\
\end{aligned}
\end{equation}
\begin{equation}
\begin{aligned}
\label{eqn:F_condition2}
& \big\lVert \nabla F \left(  \bm{u}\right)- \nabla F \left(  \bm{u}^{+} \right) \big\rVert \\
= & \left\lVert \left[ \begin{array}{c}
{\nabla F(u_1^{+(k)}) - \nabla F(u_1^{+(k)}) }  \\
\vdots\\
{\nabla F(u_N^{+(k)}) - \nabla F(u_N^{+(k)}) }
\end{array} \right]
 \right\rVert \\
\leqslant & \ 2 \check P_{L_i}^2 w_{{G}}^2 \big\lVert \bm{u} - \bm{u}^{+} \big\rVert, \; \forall  \bm{u}, \bm{u}^{+} \in \mathcal{U}. \\
\end{aligned}
\end{equation}

\begin{proposition}
Assume that the step size sequence ${ \tau_k }$ is non-increasing such that $\tau_k >0$ for all $ k \geqslant 1$.
Then, ${\bm u^{(k)}} $ and $ {\lambda_i^{(k)}},i = 1, \cdots, n,$ generated by the algorithm of DLSS can converge to an optimal primal solution with compensation $ \bar {\bm u}^{*} \in \mathcal U$ and an optimal dual solution with compensation $ \bar {\lambda}^*$, respectively. $C_\lambda$ denotes the upper bound of $\lVert \lambda \rVert$.
\end{proposition}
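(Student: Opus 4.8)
The plan is to recognize the DLSS iteration \eqref{eqn:local_state_update}--\eqref{eqn:P_S_update2} as an \emph{inexact distributed primal--dual subgradient method} and to combine a consensus-error estimate with a Lyapunov descent argument. First I would record the structural facts. Problem \eqref{eqn:objective_constraint} is convex and admits a strictly feasible point --- choosing the $u_i$ so that $\sum_i u_i P_{L_i}^{\max} + P_{loss} = P_G$ makes the left-hand side of \eqref{eqn:constraint} equal $0 < \varepsilon$ --- so Slater's condition holds, strong duality holds, and the dual optimal set is nonempty, closed and bounded; this is precisely what supplies the bound $C_\lambda$ on $\lVert \lambda \rVert$ and lets me fix a saddle point $(\bar{\bm u}^{*}, \bar\lambda^{*})$ of $\mathcal L$. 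The bounds \eqref{eqn:J_condition1}--\eqref{eqn:F_condition2} already proved above yield uniform bounds on $\nabla F$ and $\nabla J$ over $\mathcal U$, hence a uniform bound $B$ on the subgradients $\mathcal L_{u_i}, \mathcal L_{\lambda_i}$ over $\mathcal U \times [0, C_\lambda]$.

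Second, I would control the consensus error. Since $A$ is doubly stochastic and the communication graph connected, $A$ has a simple eigenvalue $1$ with eigenvector $\mathbf 1$ and contraction factor $\sigma := \lVert A - \tfrac1N \mathbf 1\mathbf 1^{\top} \rVert < 1$ on the disagreement subspace. With the prescribed initialization ${W_i}^{(0)} = P_{W_t}/N$, ${D_i}^{(0)} = \tfrac1N\sum_i \Delta P_i$, the local updates \eqref{eqn:auxiliary_1_update}--\eqref{eqn:auxiliary_2_update} give, by induction on $k$ and double stochasticity, the running-sum invariants $\sum_i {W_i}^{(k)} = \sum_i P_{W_i}^{(k)}$ and $\sum_i {D_i}^{(k)} = \sum_i \Delta P_i^{(k)}$. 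Because each primal move is $O(\tau_k)$ by \eqref{eqn:u_varible_update} and the bounded subgradients, the per-step injections $P_{W_i}^{(k)} - P_{W_i}^{(k-1)}$ and $\Delta P_i^{(k)} - \Delta P_i^{(k-1)}$ are also $O(\tau_k)$; combining this with the geometric contraction of the averaging step \eqref{eqn:local_state_update} yields disagreement bounds $\lVert N\tilde W_i^{(k)} - \sum_j P_{W_j}^{(k)}\rVert \le e^{(k)}$, $\lVert N\tilde D_i^{(k)} - \sum_j \Delta P_j^{(k)}\rVert \le e^{(k)}$ with $e^{(k)} \to 0$ and $\sum_k \tau_k e^{(k)} < \infty$ under the usual step-size conditions $\sum_k \tau_k = \infty$, $\sum_k \tau_k^2 < \infty$ (which, together with non-increasingness, is what the step-size hypothesis is meant to supply). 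Thus each agent's update \eqref{eqn:u_varible_update}--\eqref{eqn:lambda_varible_update} is a true subgradient step for the \emph{centralized} saddle problem perturbed by a vanishing, $\tau_k$-summable error.

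Third, the Lyapunov step. Set $\Phi^{(k)} = \sum_{i=1}^N\big(\lVert u_i^{(k)} - \bar u_i^{*}\rVert^2 + \lVert \lambda_i^{(k)} - \bar\lambda^{*}\rVert^2\big)$. Using nonexpansiveness of $(\cdot)^{+}$, expanding the squares, and the saddle inequality (convexity in $\bm u$, concavity in $\lambda$) so that the cross term is bounded below by the nonnegative gap $\mathcal L(\bm u^{(k-1)},\bar\lambda^{*}) - \mathcal L(\bar{\bm u}^{*},\lambda^{(k-1)})$, and absorbing the consensus error via $e^{(k)}$, I would obtain
\begin{equation*}
\Phi^{(k)} \le \Phi^{(k-1)} - 2\tau_k\big(\mathcal L(\bm u^{(k-1)},\bar\lambda^{*}) - \mathcal L(\bar{\bm u}^{*},\lambda^{(k-1)})\big) + C_1\tau_k^2 + C_2\tau_k e^{(k)} .
\end{equation*}
Summing and using $\sum_k\tau_k^2 < \infty$, $\sum_k\tau_k e^{(k)} < \infty$ shows $\Phi^{(k)}$ converges and $\sum_k \tau_k\big(\mathcal L(\bm u^{(k-1)},\bar\lambda^{*}) - \mathcal L(\bar{\bm u}^{*},\lambda^{(k-1)})\big) < \infty$; with $\sum_k\tau_k = \infty$ this drives a subsequence of the gap to zero, and convergence of $\Phi^{(k)}$ together with compactness of $\mathcal U$ upgrades it to $\bm u^{(k)} \to \bar{\bm u}^{*}$ and $\lambda_i^{(k)} \to \bar\lambda^{*}$ for all $i$. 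Finally, since the correction \eqref{eqn:P_S_update2} makes $P_{W_\Delta}$ slowly time-varying, I would note that the generator compensation is bounded and monotone, so $P_{W_\Delta}^{(k)}$ converges and the argument applies verbatim to the limiting instance --- this is the sense of ``optimal with compensation.'' The main obstacle is exactly this coupling: the subgradients are evaluated at agent-local estimates that depend on the whole past trajectory, so the crux is the quantitative bound $\sum_k\tau_k e^{(k)} < \infty$ linking the mixing rate $\sigma$ of $A$ to the step sizes; the remainder is routine bookkeeping with the Lipschitz constants from \eqref{eqn:J_condition1}--\eqref{eqn:F_condition2}.
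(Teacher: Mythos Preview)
Your plan is correct and matches the paper's own argument essentially step for step: the appendix also (i) records the consensus-error summability $\sum_k \tau_k\lVert \tilde W_i^{(k)}-\hat W^{(k-1)}\rVert<\infty$ (and the analogues for $D$ and $\lambda$) using double stochasticity of $A$, (ii) expands $\sum_i\lVert u_i^{(k)}-u_i^{*}\rVert^2$ and $\sum_i\lVert \lambda_i^{(k)}-\lambda^{*}\rVert^2$ and bounds the cross terms by the saddle-point/Lagrangian quantities plus consensus-error residuals, and (iii) invokes Polyak's sequence lemma to get convergence, then handles compensation by observing that the shifted saddle point $(\bar{\bm u}^{*},\bar\lambda^{*})$ is no farther from the iterates. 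Your write-up is in fact tighter than the appendix (you make the Slater/$C_\lambda$ justification and the step-size conditions $\sum_k\tau_k=\infty$, $\sum_k\tau_k^2<\infty$ explicit, whereas the proposition's hypothesis only says ``non-increasing, $\tau_k>0$'' and the appendix quietly uses $\tau_k\to 0$), but the skeleton is the same.
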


\begin{proof}
Please see the Appendix.
\end{proof}

\subsection{Parameter Setting}
\label{subsec:para_setting}

{Frequency setting and time delay are two important parameters in ROCOF relay. From the results in \cite{Efficient2006Vieira,Development2015motter}, the frequency setting is an important trigger condition for the load shedding process, which includes over-frequency setting and under-frequency setting.

From the equations in (\ref{eqn:dg_rocof}) and (\ref{eqn:idg_rocof}), we can know that the ROCOF ${d f}/{d t}$ has a negative correlation with the equivalent inertia $H_{sys}$ at the same power deficit $\Delta P$.
\begin{equation}
\begin{aligned}
\label{eqn:time_freq_setting}
& \dfrac{d f}{d t} = \dfrac{f_{no}\Delta P}{ 2 H_{sys} }
\end{aligned}
\end{equation}
where $H_{sys}$ can be obtained based on (\ref{eqn:dg_rocof}) and (\ref{eqn:idg_rocof}) according to \cite{Multi2013Gu}. Hence, the time $t_{tr}$ that the frequency drops to the frequency setting is represented by
\begin{equation}
\begin{aligned}
& t_{tr} = \dfrac{(f_{no} - f_{tr})}{{d f}/{d t}}= \dfrac{2 H_{sys} (f_{no} - f_{tr})}{f_{no}\Delta P}
\end{aligned}
\end{equation}

Thus, due to the low inertia $H_{sys}$, the microgrid suffers from larger ROCOF at the same power deficit compared with the conventional power system. When power imbalance happens, the time $t_{fa}$ that the frequency drops to the unsafe frequency $f_{fa}$ in microgrids is less than that in the conventional power system. Based on (\ref{eqn:time_freq_setting}), $t_{fa}$ can also be calculated by
\begin{equation}
\begin{aligned}
& t_{fa} = \dfrac{(f_{no} - f_{fa})}{{\Delta f}/{\Delta t}}= \dfrac{2 H_{sys} (f_{no} - f_{fa})}{f_{no}\Delta P}
\end{aligned}
\end{equation}

The time $t_{rp}^{}$ for ROCOF relay process equals to $t_{fa} - t_{tr}$.
Thus, $t_{rp}^{}$ has a negative correlation with $f_{tr}$. From the Fig. \ref{fig:frequency_regulation}, we can know that the time for load shedding can be calculated as $t_{ls} = t_{rp}^{} - t_{gi} - t_{ad} $. Hence, $t_{rp}^{}$ has a positive correlation with $f_{tr}$.
Considering that $t_{gi}$ is always larger than zero, $t_{ls}$ may be below zero with a large under-frequency setting. In other words, the ROCOF relay may not carry out the load shedding process in time before the microgrid collapses.
Therefore, the large frequency setting is not suitable for the proposed solution in microgrids, especially in islanding mode. Consequently, the small frequency setting $\Delta f = 0.5$Hz is selected in our solution.

The time delay is employed for improving the safety and minimizing the possibility of false operation (nuisance tripping), which usually ranges from 50ms to 500ms \cite{Evaluation2008Ten}. Firstly, the GID operates based on the consensus method, which has a good performance in reducing the measurement noise and oscillation of frequency \cite{zhao2013optimal}. Thus, this method can reduce the effect of nuisance tripping.
Secondly, the smaller time delay has little impact on the avoidance of false operation. The larger time delay leads to a longer detection time of the faults, which may cause that the load shedding process does not respond promptly. In our proposed solution, $t_{gi}$ can be considered as a part of the total time delay $t_d = t_{gi}+t_{ad}$. The smaller $t_{gi}$ gives a wide tuning range for the time delay $t_{ad}$. $t_{gi}$ is determined by communication topology, transition matrix $A$, and communication protocol. The former two can be adjusted according to the physical space and consensus theory \cite{Convergence2009Alex}. These topics are out of the scope of this paper.
The MMST protocol is proposed to reduce the time delay caused by the third one, which is introduced in the following section. To sum up, the total response time $t_{to}$ must be less than $t_{fa}$, which can be described as
\begin{equation}
\begin{aligned}
& t_{to} = t_{tr} + t_{gi} + t_{ad} < t_{fa}
\end{aligned}
\end{equation}
Additionally, the detailed simulation to analyse the performance of the proposed solution at different total time delays are conducted in the subsection \ref{subsec:plugout_line_topo}.
}

\section{ Multicast Metropolis Schedule based on TDMA for Load Shedding }
\label{sec:protocol_design}
The response time of DLSS method is determined by the convergence of the GID process, which is also related to communication protocol. Eq. (\ref{eqn:average_consensus}) can be expressed in the time based update format as follow
\begin{equation}
\label{eqn:uk_update_time}
x_i^{(k+1)t_{one}} = \sum_{j=1}^{N} a_{ij}x_{j}^{ (k t_{one})} ,
\end{equation}
where $t_{one}$ denotes the time period of each iteration. Thus, the convergence time can be represented by $T_{gi} = N_{up} t_{one}$. $ N_{up} $ is the iterations of convergence.
{$t_{one}$ consists of a communication time delay and a calculation time delay.} The calculation time delay can be neglected because the computing performance of each bus agent is powerful enough. Thus $t_{one}$ has a direct relationship with the communication protocol.

Since the TDMA scheme is a collision-free protocol, it is adopted to improve the convergence speed and guarantee the stability of DLSS method.
Thus, the MMST protocol for load shedding is proposed here.
The IEEE 802.11 protocol is employed for analysis without loss of generality.

\subsection{Time Slot Allocation for Multicast Metropolis Schedule}
The slot assignment to improve the channel utilisation is the main problem in this protocol design.
In each update process of the proposed solution, each agent needs to exchange data with all the neighbour agents.
If the unicast mode is adopted, each updating process needs time slots $S=2|\mathcal{E}|$, where $|\mathcal{E}|$ is the number of the transmission link.
The multicast mode is adopted to reduce slots $S$ in each update, i.e., each agent sends information to all his neighbour agents in the same slot. So the used time slots $S$ is reduced from $2|\mathcal{E}|$ to $ N $.
Due to the distributed nature and the sparsity characteristic of microgrids, $S$ can be further reduced by realising concurrent transmission. However, the concurrent transmission may have the hidden terminal problem which causes the packet collision.
Thus, the objective is to obtain the optimal transmission schedule to minimise the used time slot $S$ subject to two constraints. Firstly, each agent has one non-private slot to transmit information; Secondly, the two-hop neighbours $\mathcal{N}_2(i)$ of agent $i$ cannot transmit in the same slot.

The issue is a vertex colouring problem which has been proved to be an NP-hard problem.
Thus we design a heuristic algorithm to obtain the sub-optimal slot assignment inspired by \cite{Brelaz1979New}. This algorithm has two loops to find the suboptimal transmission schedule. The outer loop generates slot and allocates it to a maximum degree agent firstly in this slot until all agents have transmission slot. The inner loop is to find the maximum number of concurrent transmission and the corresponding agent group in this slot.

\begin{algorithm}[htb]
  \caption{Time Slot Allocation for Multicast Metropolis Schedule}
  \label{alg:synchronizing}
  \begin{algorithmic}[1]
  \REQUIRE agent set $\mathcal{N}$; two-hop neighbour set $\mathcal{N}_2(i)$ of agent $i$;
  \ENSURE slot number $s $;
    \STATE $s \leftarrow 0$; $\mathcal{N}^\prime \leftarrow \emptyset$; $\mathcal{N}^* \leftarrow \mathcal{N}$;
    \WHILE { $\mathcal{N}^* \neq \emptyset$ }
      \STATE generate one slot $s \leftarrow s+1 $;
      \STATE $i \leftarrow$ extract a maximal degree agent in $\mathcal{N}^*$;
      \STATE $ \mathcal{N}^{\prime} \leftarrow \mathcal{N}_2(i) +\{i\}$;
      \WHILE { $ \mathcal{N}^{\prime} \neq \mathcal{N} $ }
        \STATE $j \leftarrow$ extract a maximal degree agent in $ \mathcal{N} - \mathcal{N}^{\prime} $;
        \STATE $\mathcal{N}^* \leftarrow \mathcal{N}^* - \{j\}$;
        \STATE $\mathcal{N}^{\prime} \leftarrow \mathcal{N}^{\prime} \cup \mathcal{N}_2(j) \cup \{j\}$;
      \ENDWHILE
    \ENDWHILE
  \end{algorithmic}
\end{algorithm}

\subsection{Frame Design of Multicast Metropolis Schedule}
Due to the slot allocation algorithm in MMST, the packet loss caused by the hidden terminal problem is avoided. However, packet loss caused by link quality cannot be avoided. Thus the data frame is defined in Fig. \ref{fig:frame_structure} for reliable packet delivery.

The \textit{Index\_DATA} includes the consensus indexes of the currently transmitted data, which is used for consensus operation in the GID and utilisation level update.
\textit{Index\_NEIG} and \textit{Bitmap} indicate whether the data of neighbour agents have been received successfully, which are inserted into the data frame.
\textit{Index\_NEIG} contains the neighbour indexes of the agent who transmits the data frame.
The \textit{Status\_NEIG-i} in \textit{Bitmap} is the status of the received data from the $i$-th neighbour agent.
If the previous data frame from the $i$-th neighbour agent has been received successfully, \textit{Status\_NEIG-i} is set to $1$, otherwise set to $0$.
If the previous data frame fails to be received, the neighbour agent will add it in history data part of the data frame and retransmit with the new data.
Thus, retransmission of the lost packet is realised to improve the transmission reliability.
The status data include two parts: current status data and history status data. Current status data are the update data of each agent, and the history status data are the previous data which have not been received correctly. The length of data can be adjusted according to the system requirement. In our method, the data that needs to be updated include load information $ P_{L_i} $, $\rho_{g,i}$, power deficit $\Delta P_i$, and utilization level $u_i$, $\tilde u_i$.

\begin{figure}[ht]
\begin{center}
\vspace{-2ex}
\setlength{\belowcaptionskip}{-0.2cm}
\includegraphics[width = 0.48 \textwidth]{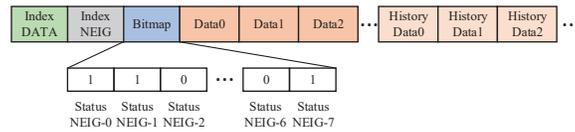}
\caption{ DATA Frame Structure.}
\captionsetup{justification=centering}
\label{fig:frame_structure}
\end{center}
\vspace{-2ex}
\end{figure}

\section{Simulations}
\label{sec:simulation}
In this section, the proposed distributed load shedding solution is tested using NS3-Matlab co-simulator which is implemented based on the co-simulation structure \cite{Zhizhang2016NS3}. The microgrid is modelled in Matlab/Simulink, and the network communication is simulated in NS3.
The two simulators can exchange message by the interactive interface part which is designed based on socket model. {The co-simulation framework is shown in Fig. \ref{fig:configuration(4)}.}

\begin{figure}[htbp]
\centering
\vspace{-2ex}
\subfigure[]{\label{fig:configuration(4)}
\includegraphics[width=0.32 \textwidth]{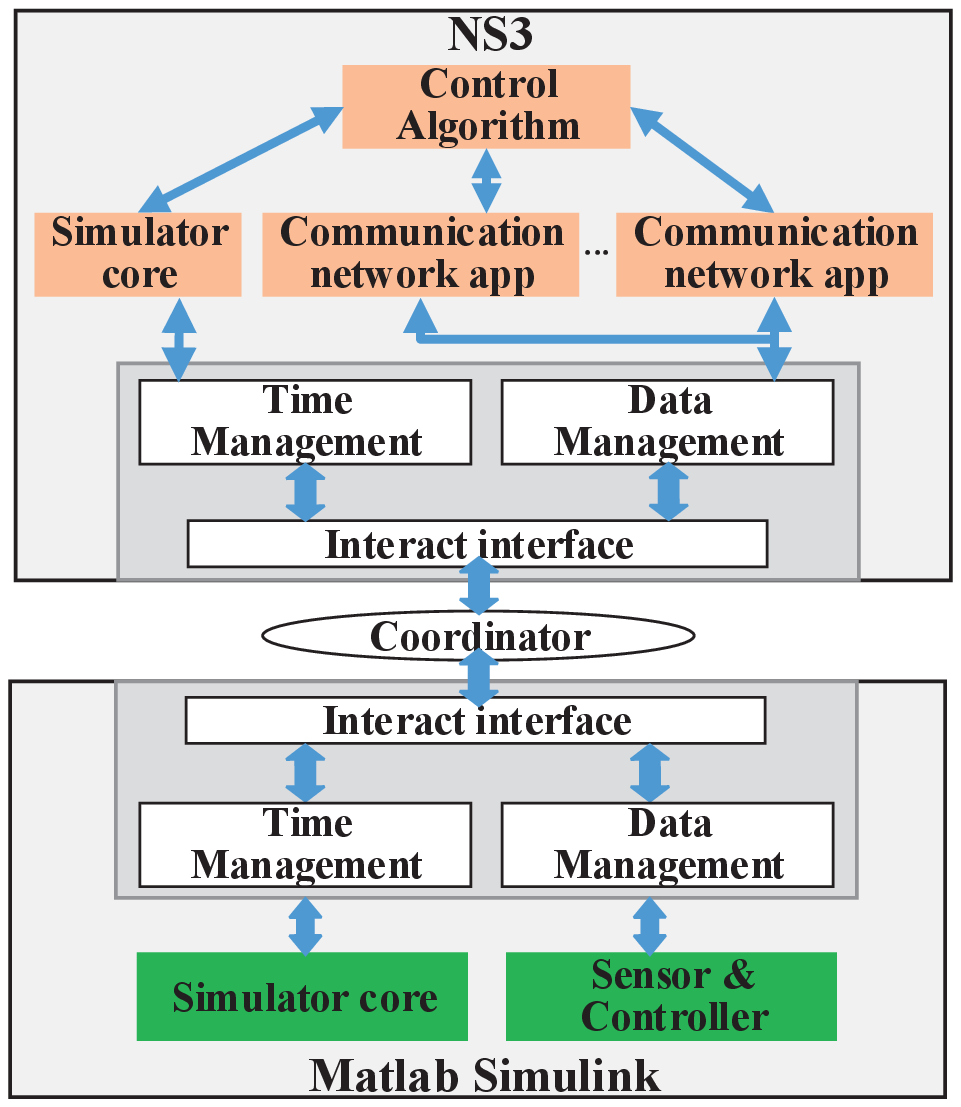}}
\subfigure[]{\label{fig:configuration(3)}
\includegraphics[width=0.48 \textwidth]{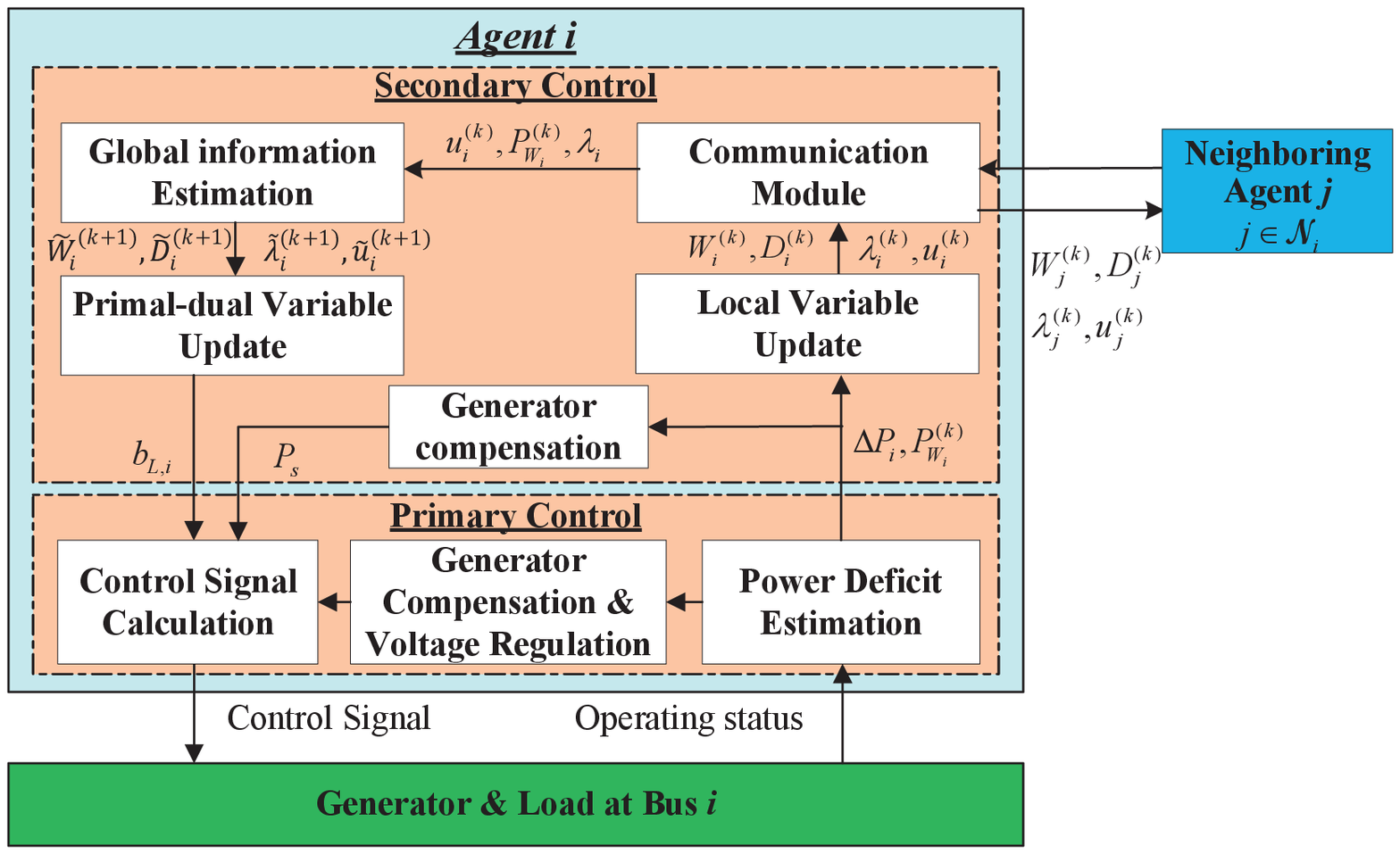}}
\caption{  {(a) Co-simulation framework based on NS3 and Matlab. (b) Management operation diagram. } }
\label{fig:configuration}
\vspace{-2ex}
\end{figure}

The agents communicate via a wireless network which has a same communication topology as the power transmission topology.
The transition matrix $A$ employs
an improved \textit{Metropolis} method \cite{xiao2007distributed}, which is defined as:
\begin{equation}
\label{equ:trans_matrix}
a_{ij}= \left\{
\begin{array}{ll}
\frac {1}{\max\{N_i,N_j\}+1} & \quad  j \in \mathcal{N}(i) \\
1- \sum_{j \in \mathcal{N}_i} \frac{1}{\max\{N_i,N_j\}+1} & \quad  i=j \\
0 & \quad \text{otherwise} ,
\end{array}
\right.
\end{equation}
where $N_i$ denotes the number of neighbour agent $i$, and $\mathcal{N} (i)$ represents the set of agent $i$.

The management operation of each agent is shown in Fig. \ref{fig:configuration(3)}. The hierarchical management strategy consists of two control levels.
The secondary level is responsible for exchanging and updating the utilisation level and setting the reference power of loads. The communication module exchanges local information with its neighbour agents.
The primary level is used for real power tracking while satisfying other constraints including reactive power and voltage regulation.
There are multi-priority loads at each bus. {$\kappa_f$ and $\kappa_v$ are set to $1.0$.} {The trigger frequency and the safety threshold frequency are set to 49.5 Hz and 48 Hz.} The weight factor $w_{g}$ of the three priority loads are set to 1, 2 and 5.
In normal condition, SG is just used for voltage regulation and generates power at a low level. Once fault happens, the SG generates power to compensate the deficit.

\subsection{{Case 1: Islanding in a Microgrid with Radial Topology}}
The 6-bus system with radial topology is illustrated in Fig. \ref{fig:configuration(1)}.
This system contains different types of DGs, such as SG, PV, wind turbine (WT). The information of the generators and loads are shown in Table \ref{tab:ders} and \ref{tab:para_load}.
The ramp-up and ramp-down rates of the SG are both set to 40 kW/s, which determine the maximum compensation rate.
In this case, the communication topology is depicted in Fig. \ref{fig:configuration(2)}.
\begin{figure}[htbp]
\centering
\vspace{-2ex}
\subfigure[]{\label{fig:configuration(1)}
\includegraphics[width=0.48 \textwidth]{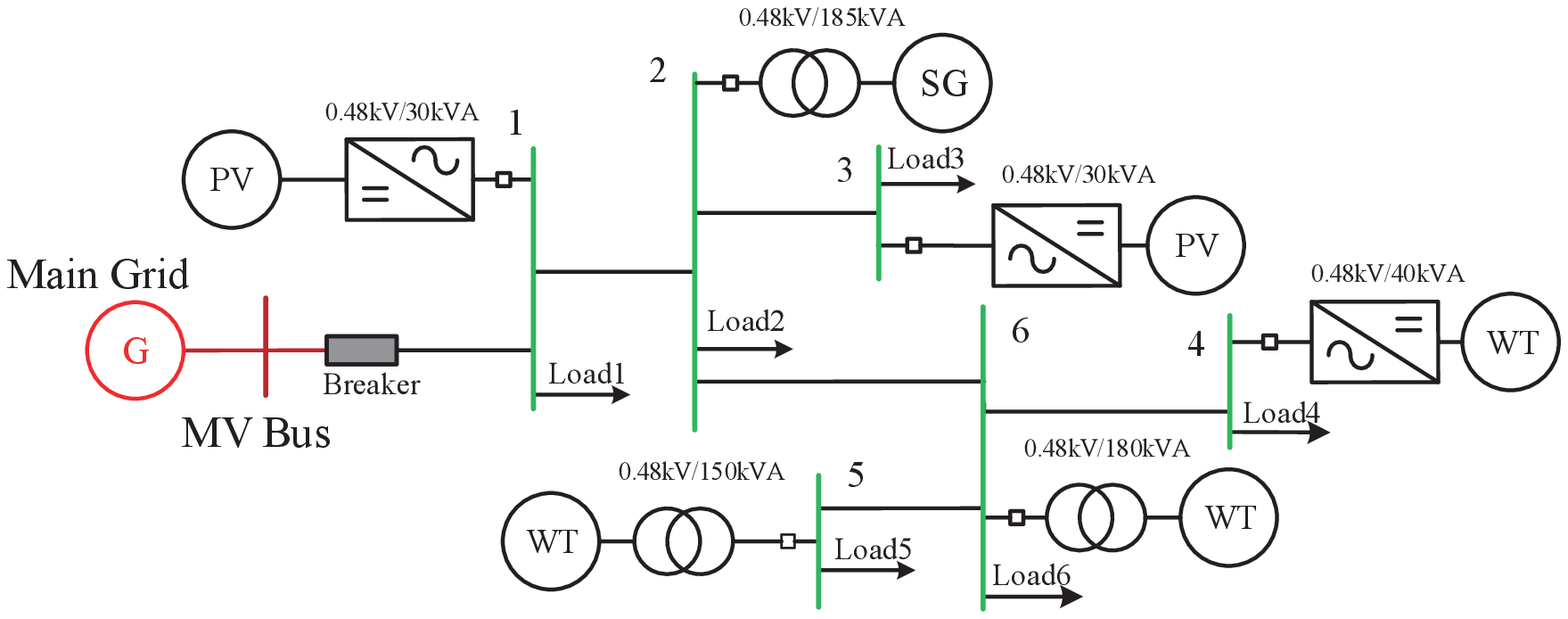}}
\subfigure[]{\label{fig:configuration(2)}
\includegraphics[width=0.3 \textwidth]{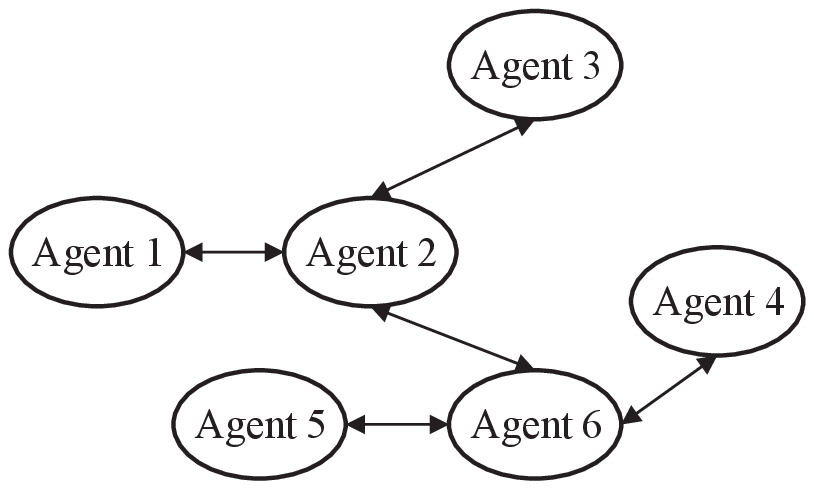}}
\caption{ {(a) 6-bus microgrid with radial topology. (b) Communication topology.} }
\end{figure}

\begin{table}[htbp]
\centering
\caption{Parameters of DERs.}
\label{tab:ders}
\begin{tabular}{cccclc}
\hline
Bus & \begin{tabular}[c]{@{}c@{}}DG\\ Types\end{tabular} & \begin{tabular}[c]{@{}c@{}}Capacity\\ (kVA)\end{tabular} & Types & \begin{tabular}[c]{@{}c@{}}($H_{cg} / \xi$)\end{tabular} & \begin{tabular}[c]{@{}c@{}}Control \\ Mode \end{tabular}  \\ \hline
1   & PV          & 30                        &   Inverter-based    &     $\xi$ = 1.5e-3 & MPPT                \\
2   & SG          & 185                       &   Conventional      &     $H_{cg}$ = 1.68 & PQ-V/f              \\
3   & PV          & 30                        &   Inverter-based    &     $\xi$ = 1.5e-3  & MPPT               \\
4   & WT          & 40                        &   Conventional      &     $H_{cg}$ = 0.68 & PQ              \\
5   & WT          & 150                       &   Conventional      &     $H_{cg}$ = 1.38 & PQ              \\
6   & WT          & 180                       &   Conventional      &     $H_{cg}$ = 1.46 & PQ              \\ \hline
\end{tabular}
\end{table}
\begin{table}[htbp]
\centering
\caption{Parameters of loads.}
\label{tab:para_load}
\begin{tabular}{ccccccc}
\hline
\multirow{2}{*}{Load} & \multirow{2}{*}{\begin{tabular}[c]{@{}c@{}}Real Power\\ (kW)\end{tabular}} & \multirow{2}{*}{\begin{tabular}[c]{@{}c@{}} $P_{L,i}$\\ (kW)\end{tabular}} & \multirow{2}{*}{$N_{L,i}$} & \multicolumn{3}{c}{$\rho_{g,i}$}  \\
                      &  & & &  $\rho_{1,i}$ &  $\rho_{2,i}$  &   $\rho_{3,i}$ \\ \hline
Load1   &  100  & 2 & 50  &   0.5      &  0.3      &  0.2     \\
Load2   &  120  & 2 & 60  &   0.6      &  0.2      &  0.2     \\
Load3   &  150  & 2 & 75  &   0.5      &  0.3      &  0.2     \\
Load4   &  100  & 2 & 50  &   0.3      &  0.5      &  0.2     \\
Load5   &  100  & 2 & 50  &   0.3      &  0.5      &  0.2     \\
Load6   &  120  & 2 & 60  &   0.5      &  0.3      &  0.2     \\ \hline
\end{tabular}
\end{table}

When $t = 2 s$, the distributed microgrid is disconnected from the main grid. The power generation cannot restore system frequency immediately. As a result, the system frequency starts to drop rapidly after this disturbance.

\subsubsection{Global information discovery}
The GID process is always carried out periodically in normal operation mode. When power imbalance occurs, this process is triggered by the drop of system frequency.
The global information that contains the power deficit $\Delta P_i$, the total real power of loads $P_{L}^{\max}$, and the ratio of the $g$-th loads $\rho_g$.
The iteration processes of the GID are shown in Fig. \ref{fig:agent_update}.  $P_{V_1}$ can be obtained from Fig. \ref{fig:agent_update}, so $\rho_1$ is calculated by $ (P_{V_1} / P_{L}^{\max}) $.
The power deficit of different DGs $\Delta P_i$ are estimated by different methods in (\ref{eqn:dg_rocof}) and (\ref{eqn:idg_rocof}).

\begin{figure}[htbp]
\centering
\vspace{-2ex}
\subfigure[]{\label{fig:agent_update}
\includegraphics[width=0.48 \textwidth]{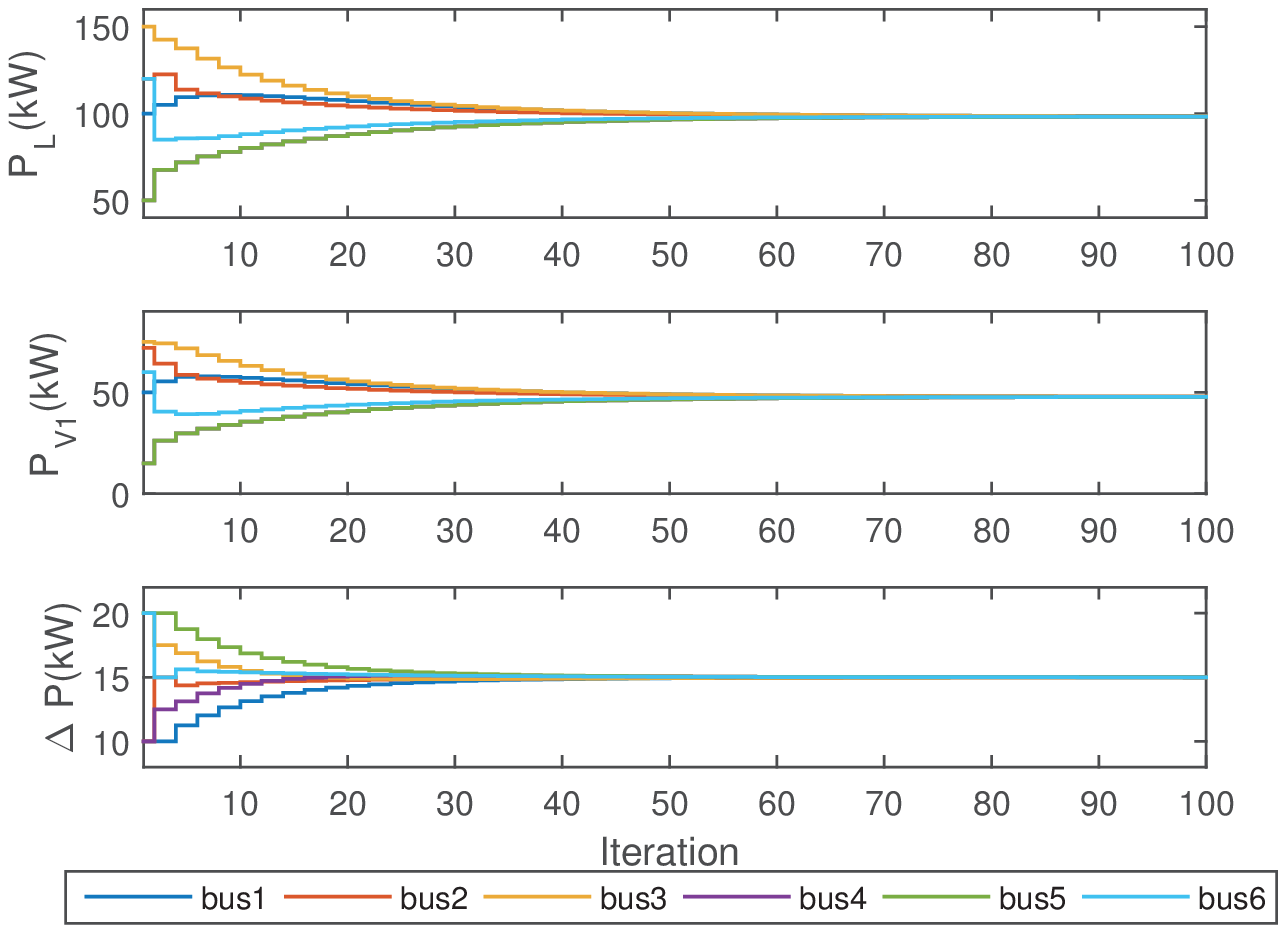}}
\subfigure[]{\label{fig:coordination_error}
\includegraphics[width=0.48 \textwidth]{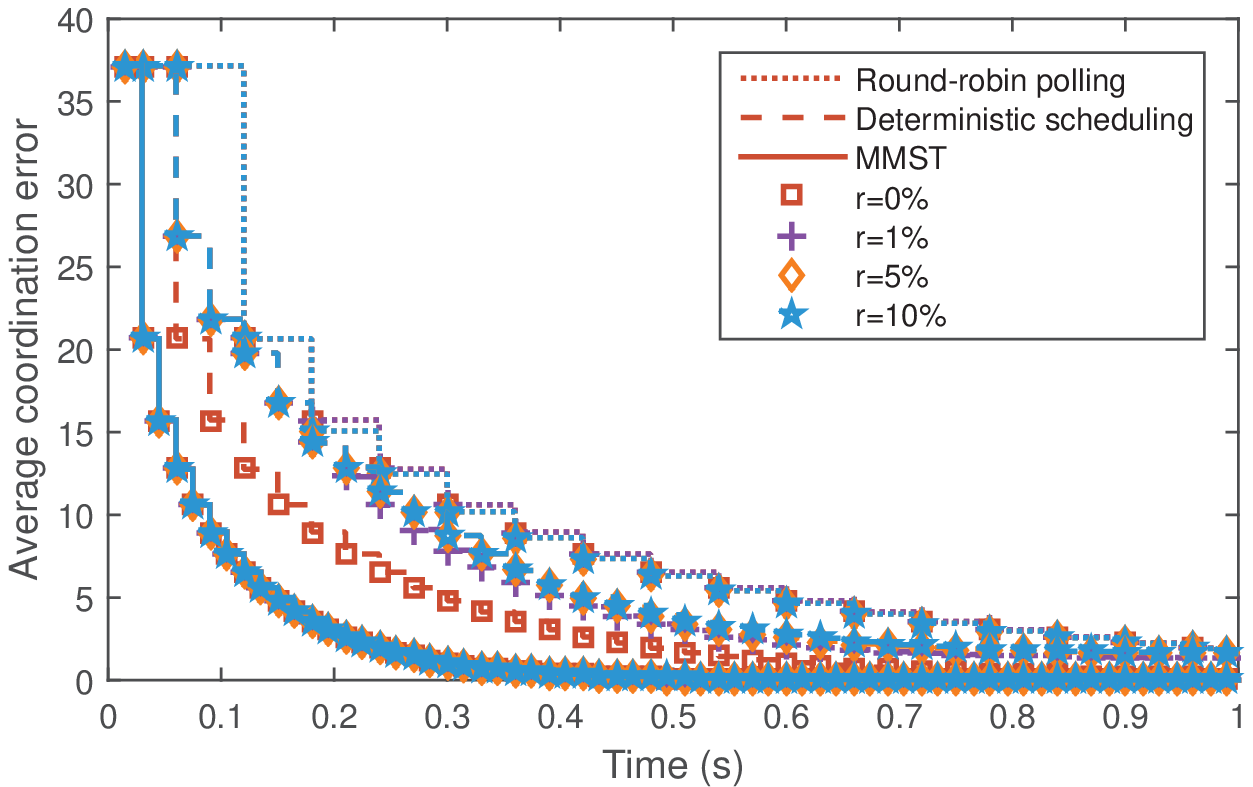}}
\caption{ (a) Global information discovery. (b) Coordination error comparison.}
\end{figure}

The process of power deficit discovery is used for the convergence analysis of the proposed MMST protocol, round-robin polling mechanism, and deterministic scheduling \cite{liang2012multiagent}).
In this case, the time slot of communication protocol is set to $5$ ms, which can meet the per-hop latency in sub-6 GHz wireless technology.
The three protocols need to allocate 4, 6, and 10 time-slots for one iteration $t_{one}$, respectively. Four conditions with different packet loss rate $r$ are considered for analysis. The results are shown in Fig. \ref{fig:coordination_error} and Table \ref{tab:convergence_global}.

\begin{table}[htbp]
\centering
\caption{Performance Comparison in Global Information Discovery. }
\label{tab:convergence_global}
\begin{tabular}{ccccccc}
\hline
{\begin{tabular}[c]{@{}c@{}}Packet \\ Loss Rate  \end{tabular}} & \multicolumn{2}{c}{\begin{tabular}[c]{@{}c@{}} Round-robin \\ polling \end{tabular}}  & \multicolumn{2}{c}{\begin{tabular}[c]{@{}c@{}}Deterministic\\ scheduling \end{tabular}} & \multicolumn{2}{c}{MMST} \\
 \multicolumn{1}{c}{\multirow{1}{*}{$ r $ }}  & \multicolumn{1}{l}{\multirow{1}{*}{$T_{gi}$ (s)}} & \multirow{1}{*}{$e$} & \multicolumn{1}{l}{\multirow{1}{*}{$T_{gi}$ (s)}}        & \multirow{1}{*}{$e$}        & \multicolumn{1}{l}{\multirow{1}{*}{$T_{gi}$ (s)}}     & \multirow{1}{*}{$e$}     \\ \hline
0\%  &   1.25    & 0\%    &  0.66    &    0\%   &   0.32   &   0\%     \\
1\%  &   1.25    & 1.3\%  &  0.75    &  1.29\%  &   0.32   &   0\%     \\
5\%  &   1.35    & 1.6\%  &  0.81    &  1.61\%  &   0.32   &   0\%     \\
10\% &   1.40    & 1.55\% &  0.84    &  1.54\%  &   0.32   &   0.01\%  \\ \hline
\end{tabular}
\end{table}

The results demonstrate that the convergence time $ T_{gi}$ increases and the relative error $e$ becomes larger with the increase of packet loss rate. The average coordination error $ e_{avg} $ is defined as
\begin{equation}
\label{eqn:average_error}
e_{avg} = \sqrt{\dfrac{1}{N} \sum_{i \in \mathcal{N}} (x_i[t] - x^{*})^2} ,
\end{equation}
where $x^{*}$ is the true value which can be obtained by (\ref{eqn:average_consensus}).
Packet loss has less impact on the performance of MMST and the convergence time of MMST is shorter than that of the others.
The convergence time of the GID is important for load shedding which is a time-sensitive process. The less the convergence time is, the more time the distributed load shedding has. For instance, the convergence of round-robin polling is more than the time that frequency dropped to 48Hz in Fig. \ref{fig:frequency_response_system}. Consequently, the DLSS algorithm does not have enough time to respond to the overload. Therefore, MMST is more suitable for load shedding method.

\subsubsection{ Load shedding process }
In this simulation, we the time step of load shedding $\Delta t = 80$, $t_{ad}=0$ms and $\Delta \tilde P = 120$kW.
The power imbalance occurs at $t = 2 s$, the unused capacity of all the DGs cannot immediately eliminate the power deficiency.
DLSS method is carried out directly based on the information estimated by the global information discovery. Meanwhile, SG generates real power to compensate the deficiency.
Once all the agents obtain the global information, the DLSS method is carried out to shed loads.
Depending on the DLSS method, cooperative load shedding process can be achieved.
The utilisation levels and objective value at each bus are calculated locally.
The utilisation levels are asymptotically converged, and the objective value converges to 0 with $ \tau_k= 1/[ 10(\check P_{L_i}^{2})+ C_\lambda)] $.

\begin{figure}[htbp]
\setlength{\belowcaptionskip}{-0.3cm}
\centering
\subfigure[]{\label{fig:utilization_level_objective_value}
\includegraphics[width = 0.48 \textwidth]{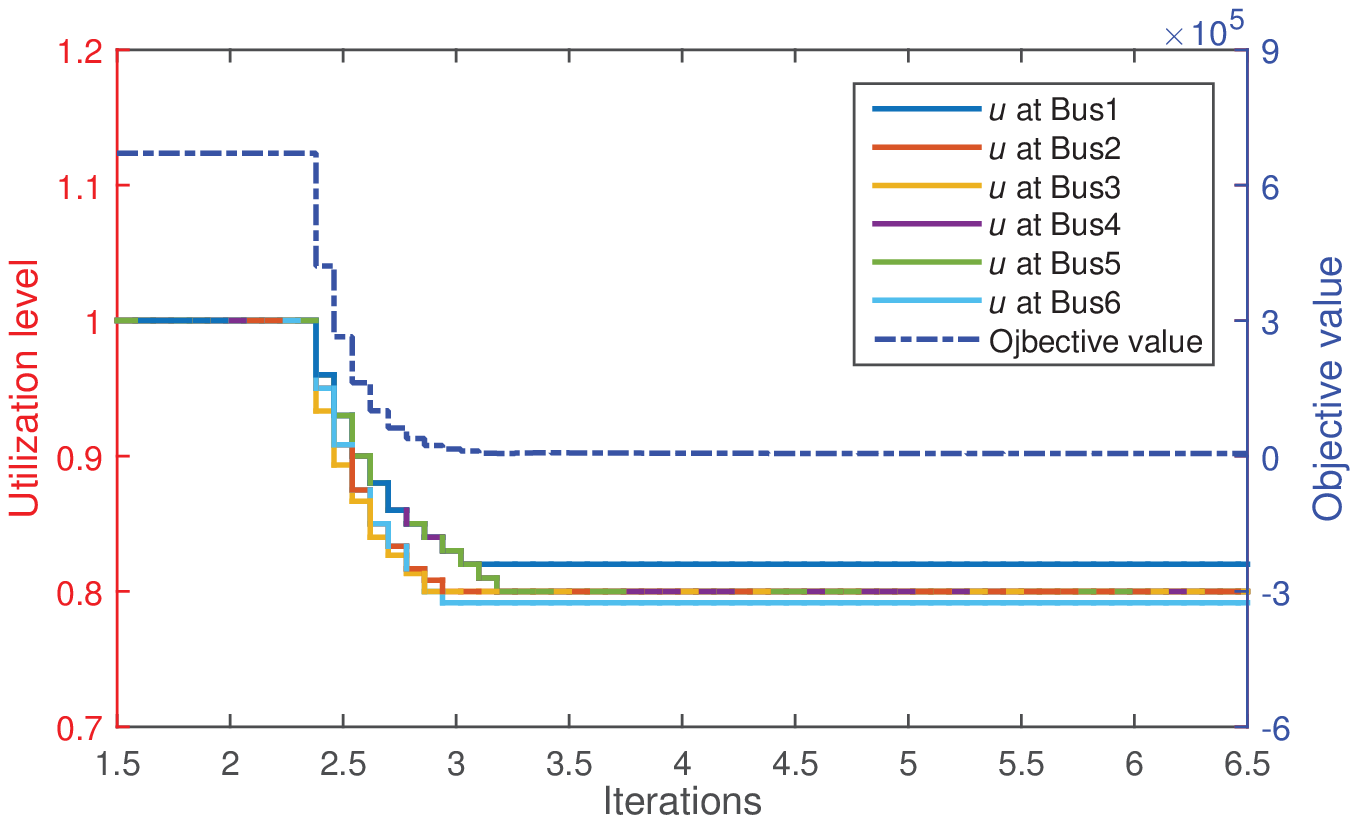}}
\subfigure[]{\label{fig:frequency_response_system}
\includegraphics[width = 0.48 \textwidth]{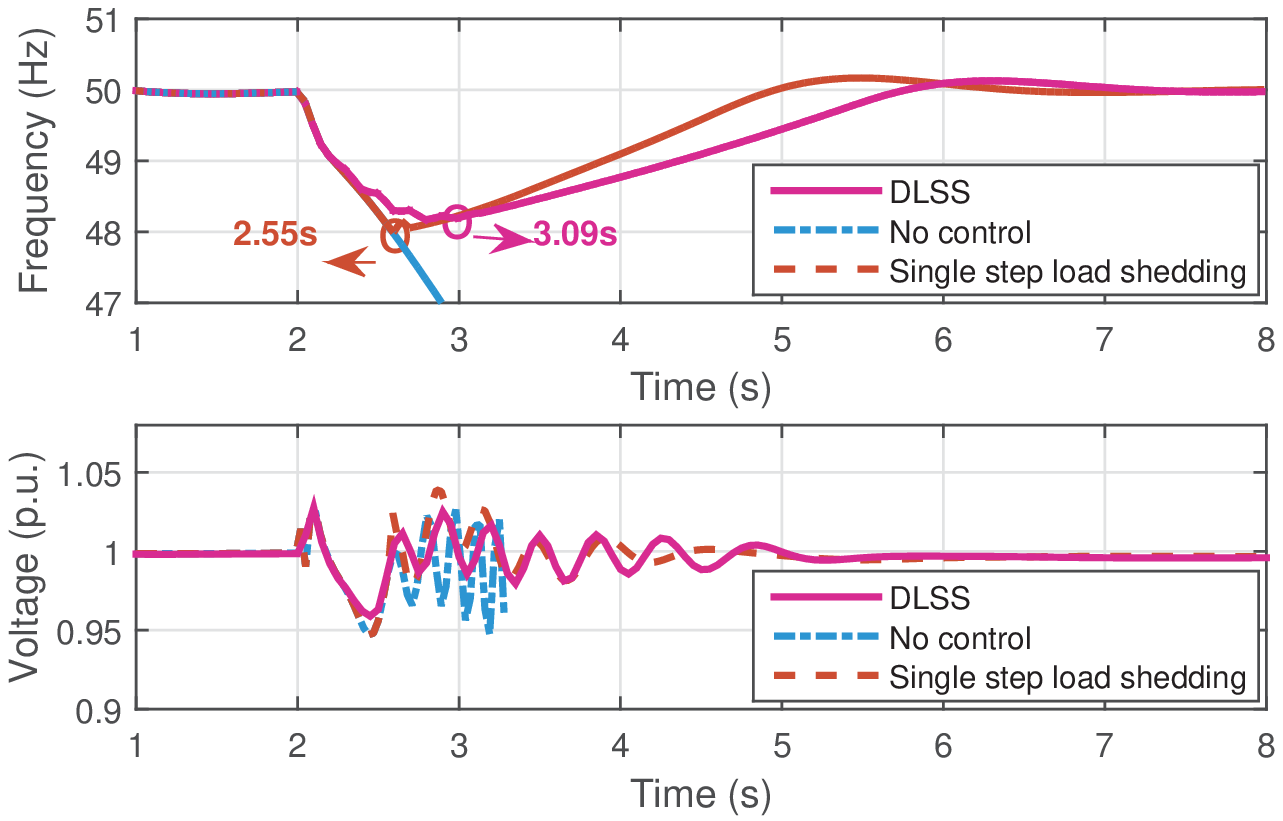}}
\subfigure[]{\label{fig:load_shedding_amout_1}
\includegraphics[width = 0.48 \textwidth]{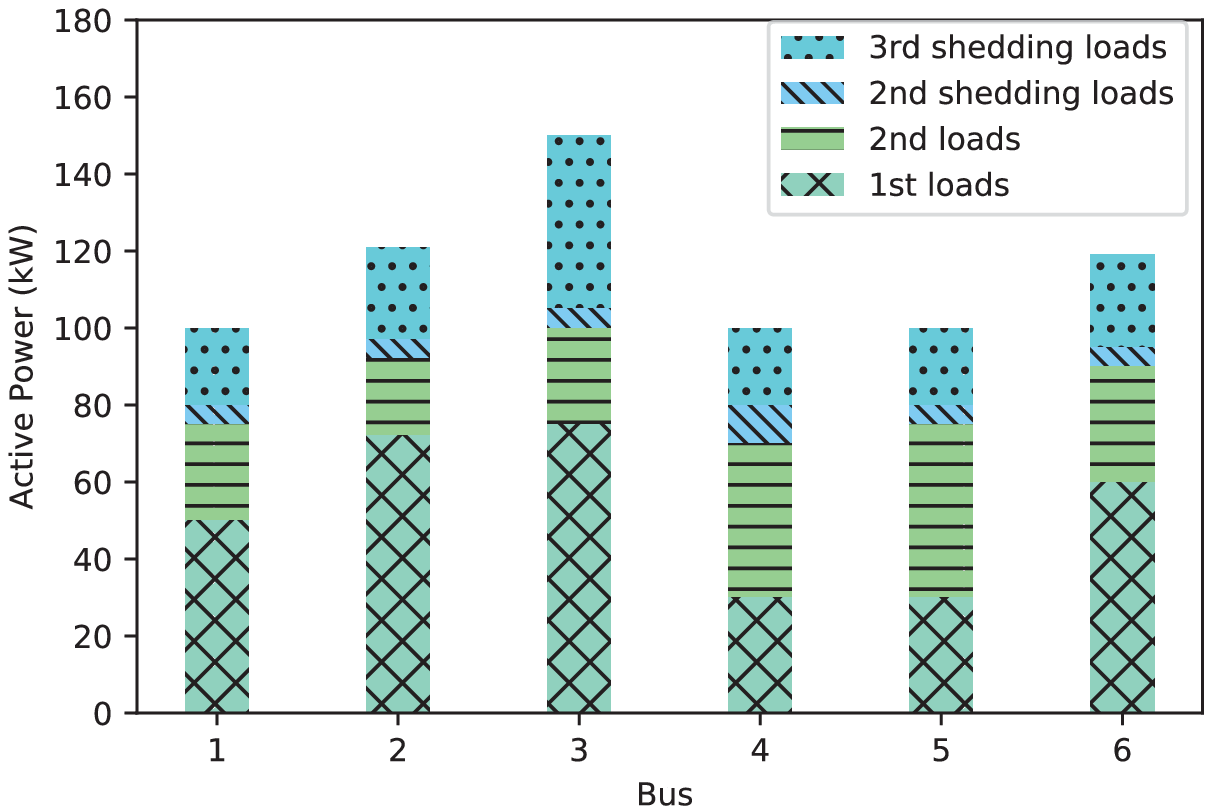}}
\caption{ (a) Global information discovery. (b) Coordination error comparison. (c) Load shedding at each bus. }
\label{fig:results}
\end{figure}

From Fig. \ref{fig:frequency_response_system}, we can observe that the system frequency drops close to 48.00 Hz, then gradually recovers to the rated value 50 Hz.
Due to the gradually shedding load, the ROCOF $df/dt$ is reduced.
In the process of load shedding, the voltage fluctuates.
The voltage response of SG in Fig. \ref{fig:frequency_response_system} shows that the proposed DLSS method will not cause under voltage during the process of the whole control. The frequency will drop below 47.5 Hz and be unsafe with the round-robin polling and deterministic schedule in this scenario. Because their convergence time in table \ref{tab:convergence_global} is larger than the drop time to be unsafe.
Additionally, the final load-shedding amount at each bus is depicted in Fig. \ref{fig:load_shedding_amout_1}.
The bars filled with dots and hatched lines shown in Fig. \ref{fig:load_shedding_amout_1} indicate the load-shedding amount.
We can observe that the first-grade loads remain unchanged, part of the second-grade loads and all the third loads are disconnected.
In this case, it is evident that DLSS method can implement stable load shedding.
Simulation results demonstrate the effectiveness of the proposed scheme to maintain frequency stability during a large disturbance.

\subsubsection{Load-shedding amount analysis}
The compensation power amount of distributed SG is related to the adjusted time of DLSS method, which is impacted by step size $ \tau_k $.
The more time the SG has for compensation, the less of loads should be shed. The simulation is carried out with different parameters $\tau_k$ and power deficiency $\Delta \tilde P$.
The results are shown in Table. \ref{tab:shedding_amount_comparsion}.
The five parameters of $\tau_k$ are $ 1/[ 20\check P_{L_i}^{2}(w_{G}^2+ C_\lambda)] $, $ 1/[ 15\check P_{L_i}^{2}(w_{G}^2+ C_\lambda)] $, $ 1/[ 12\check P_{L_i}^{2}(w_{G}^2+ C_\lambda)] $, $ 1/[ 10\check P_{L_i}^{2}(w_{G}^2+ C_\lambda)] $, $ 1/[ 4 \check P_{L_i}^{2}(w_{G}^2+ C_\lambda)] $.
We can observe that the number of shedding steps is reduced and the load-shedding amount at each step is increased with the increase of $\tau_k$.
Larger shedding amount can realise the supply-demand balance faster, but there is not sufficient time for generator compensation. Thus, the load-shedding amount isn't reduced significantly.
Because the smaller parameters $\tau_k$ have little impact on the reduction of frequency derivative, it cannot avoid the frequency dropping to the unsafe range.
From table \ref{tab:shedding_amount_comparsion}, it can be seen that the system frequency drops to the unsafe range when the power deficits are 160 and 200 kW with the smallest $\tau_k$. In these cases, the safety threshold shedding is executed. Thus, the final power deficit $\Delta {\tilde P}$ by load shedding is more than the case with larger $\tau_k$.
With the same $ \tau_k$, the load-shedding amount at each step and the number of shedding steps are affected by the power deficit $\Delta \tilde P$.
A rapid frequency decline can be decreased after executing the steps with a large load-shedding amount. Consequently, the frequency would not drop too fast to reach the unsafe range.
When the power deficit is relatively small, the initial steps have small load-shedding amount to avoid shedding loads too quickly. Hence, the generator has enough time to compensate the power deficit.

\begin{table}[htbp]
\centering
\caption{Load-shedding amount with different step sizes.}
\label{tab:shedding_amount_comparsion}
\begin{tabular}{ccllll}
\hline
\multirow{2}{*}{\begin{tabular}[c]{@{}c@{}}Power deficit \\ $\Delta \tilde P$ (kW) \end{tabular}} & \multicolumn{5}{c}{ $\tau_k$ } \\
               & \multicolumn{1}{l}{\multirow{1}{*}{1}} & \multirow{1}{*}{2} & \multirow{1}{*}{3}   & \multirow{1}{*}{4} & \multirow{1}{*}{5} \\ \hline
80             &  58.0   & 72.0   & 72.0   &  72.0  & 74.0              \\ \hline
120            &  74.0   & 92.0   & 92.0   &  94.0  & 94.0             \\ \hline
160            &  156.0      & 150.0  & 152.0  &  152.0 & 154.0             \\ \hline
200            &  198.0      & 190.0  & 192.0  &  192.0 & 194.0              \\ \hline
\end{tabular}
\end{table}

\subsection{Case 2: Disconnection in a Microgrid with Line Topology}
\label{subsec:plugout_line_topo}
The 6-bus system with line topology is illustrated in Fig. \ref{fig:case2_configuration(1)}.
The information of the generators and loads are shown in Table \ref{tab:case2_ders} and \ref{tab:case2_para_load}.
The ramp-up and ramp-down rates of the SG are also both set to 40 kW/s, which determine the maximum compensation rate.
The communication topology of this system is depicted in Fig. \ref{fig:case2_configuration(2)}.
A disconnection of the WT at bus 5 is simulated in the islanding mode.

\begin{figure}[htbp]
\centering
\vspace{-20pt}
\subfigure[]{\label{fig:case2_configuration(1)}
\includegraphics[width=0.48 \textwidth]{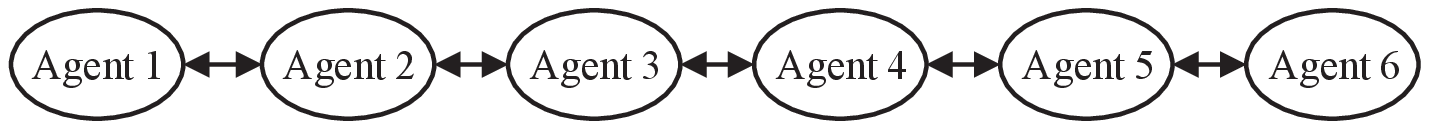}}
\subfigure[]{\label{fig:case2_configuration(2)}
\includegraphics[width=0.45 \textwidth]{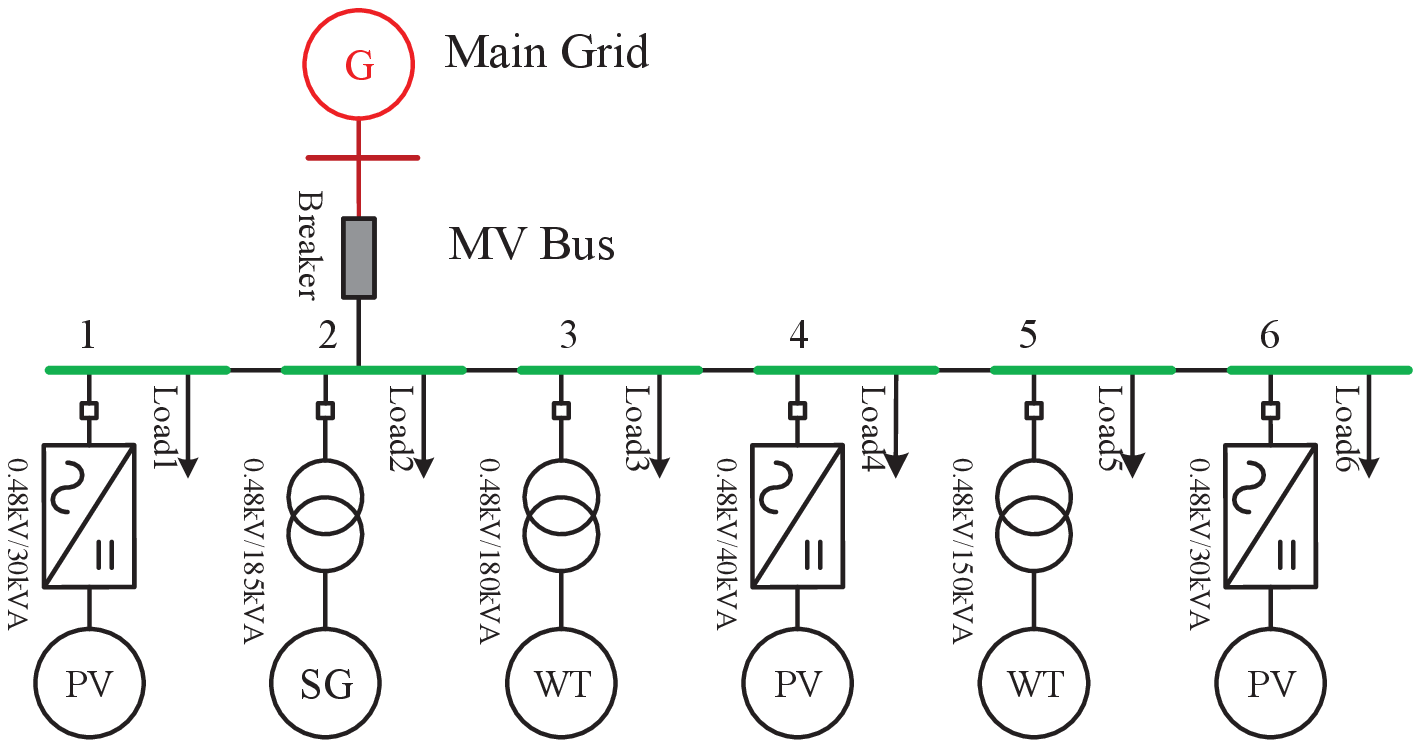}}
\caption{ {(a) 6-bus microgrid with line topology. (b) Communication topology.} }
\end{figure}

\begin{table}[htbp]
\centering
\caption{{Parameters of DERs.}}
\label{tab:case2_ders}
\begin{tabular}{cccclc}
\hline
Bus & \begin{tabular}[c]{@{}c@{}}DG\\ Types\end{tabular} & \begin{tabular}[c]{@{}c@{}}Capacity\\ (kVA)\end{tabular} & Types & \begin{tabular}[c]{@{}c@{}}($H_{cg} / \xi$)\end{tabular} & \begin{tabular}[c]{@{}c@{}}Control \\ Mode \end{tabular}  \\ \hline
1   & PV  & 30    &   Inverter-based  &  $\xi$ = 1.5e-3  & MPPT     \\
2   & SG  & 300   &   Conventional    &  $H_{cg}$ = 2.62.6
 & PQ-V/f   \\
3   & WT  & 150   &   Conventional    &  $H_{cg}$ = 1.38 & PQ     \\
4   & PV  & 40    &   Inverter-based  &  $\xi$ = 1.5e-3  & MPPT     \\
5   & WT  & 150   &   Conventional    &  $H_{cg}$ = 1.38 & PQ       \\
6   & PV  & 30    &   Inverter-based  &  $\xi$ = 1.5e-3  & MPPT     \\ \hline
\end{tabular}
\end{table}
\begin{table}[htbp]
\centering
\caption{{Parameters of loads.}}
\label{tab:case2_para_load}
\begin{tabular}{ccccccc}
\hline
\multirow{2}{*}{Load} & \multirow{2}{*}{\begin{tabular}[c]{@{}c@{}}Real Power\\ (kW)\end{tabular}} & \multirow{2}{*}{\begin{tabular}[c]{@{}c@{}} $P_{L,i}$\\ (kW)\end{tabular}} & \multirow{2}{*}{$N_{L,i}$} & \multicolumn{3}{c}{$\rho_{g,i}$}  \\
                      &  & & &  $\rho_{1,i}$ &  $\rho_{2,i}$  &   $\rho_{3,i}$ \\ \hline
Load1   &  80   & 2 & 40  &   0.5      &  0.4      &  0.1     \\
Load2   &  120  & 2 & 60  &   0.6      &  0.2      &  0.2     \\
Load3   &  140  & 2 & 70  &   0.5      &  0.3      &  0.2     \\
Load4   &  80   & 2 & 40  &   0.4      &  0.5      &  0.1     \\
Load5   &  140  & 2 & 70  &   0.3      &  0.5      &  0.2     \\
Load6   &  100  & 2 & 50  &   0.5      &  0.3      &  0.2     \\ \hline
\end{tabular}
\end{table}

\begin{table}[htbp]
\centering
\caption{{Performance Comparison in Global Information Discovery. }}
\label{tab:case2_convergence_global}
\begin{tabular}{ccccccc}
\hline
{\begin{tabular}[c]{@{}c@{}}Packet \\ Loss Rate  \end{tabular}} & \multicolumn{2}{c}{\begin{tabular}[c]{@{}c@{}} Round-robin \\ polling \end{tabular}}  & \multicolumn{2}{c}{\begin{tabular}[c]{@{}c@{}}Deterministic\\ scheduling \end{tabular}} & \multicolumn{2}{c}{MMST} \\
 \multicolumn{1}{c}{\multirow{1}{*}{$ r $ }}  & \multicolumn{1}{l}{\multirow{1}{*}{$T_{gi}$ (s)}} & \multirow{1}{*}{$e$} & \multicolumn{1}{l}{\multirow{1}{*}{$T_{gi}$ (s)}}        & \multirow{1}{*}{$e$}        & \multicolumn{1}{l}{\multirow{1}{*}{$T_{gi}$ (s)}}     & \multirow{1}{*}{$e$}     \\ \hline
0\%  &   0.95    & 0\%    &  0.57    &    0\%   &   0.29   &   0\%     \\
1\%  &   0.95    & 1.26\% &  0.65    &  1.25\%  &   0.29   &   0\%     \\
5\%  &   1.05    & 1.56\% &  0.70    &  1.51\%  &   0.29   &   0\%     \\
10\% &   1.10    & 1.53\% &  0.75    &  1.49\%  &   0.29   &   0.01\%  \\ \hline
\end{tabular}
\end{table}

{
\subsubsection{Global information discovery}
The frequency starts to drop rapidly when the disconnection of the WT in bus 5 takes place at $t = 2 $s. The unused capacity of SG cannot immediately eliminate the power deficiency. If the frequency drops to the trigger frequency $f_{tr}$, the GID is carried out immediately.
In this case, the process of total load power is used for the convergence analysis of the proposed MMST protocol and the other two protocols. The time slot of communication protocol is also set to $5$ ms.
The three protocols need to allocate 3, 6, and 10 time-slots for one iteration $t_{one}$, respectively.
{Without consideration of packet loss, the convergence speed of our proposed protocol is two times faster than the deterministic scheduling}. And it is at least three times faster than the round-robin polling. Four conditions with different packet loss rate $r$ are also considered for analysis. The simulation results are shown in Table \ref{tab:case2_convergence_global}.

From the results, we can observe that the $t_{gi}$ has a direct relationship with the used time slots in each iteration $t_{one}$.
Additionally, packet loss has less impact on the performance of MMST and the convergence time of MMST is shorter than of the others.
Those results verified the effectiveness of the proposed MMST protocol.
}

{
\subsubsection{ Load shedding process }

\begin{figure}[htbp]
\setlength{\belowcaptionskip}{-0.5cm}
\vspace{-20pt}
\centering
\subfigure[]{\label{fig:case2_utilization_level_objective_value}
\includegraphics[width = 0.48 \textwidth]{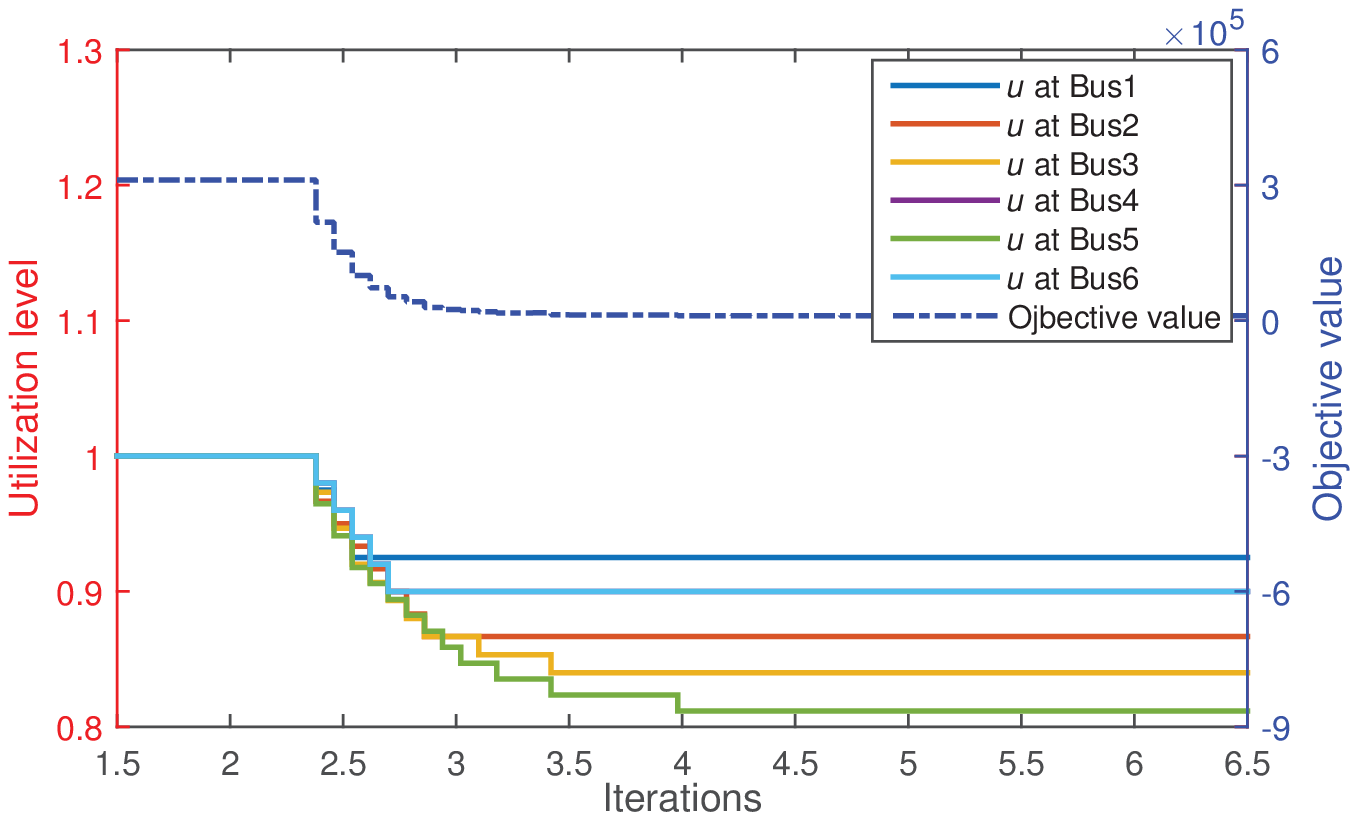}}
\subfigure[]{\label{fig:case2_frequency_response}
\includegraphics[width = 0.48 \textwidth]{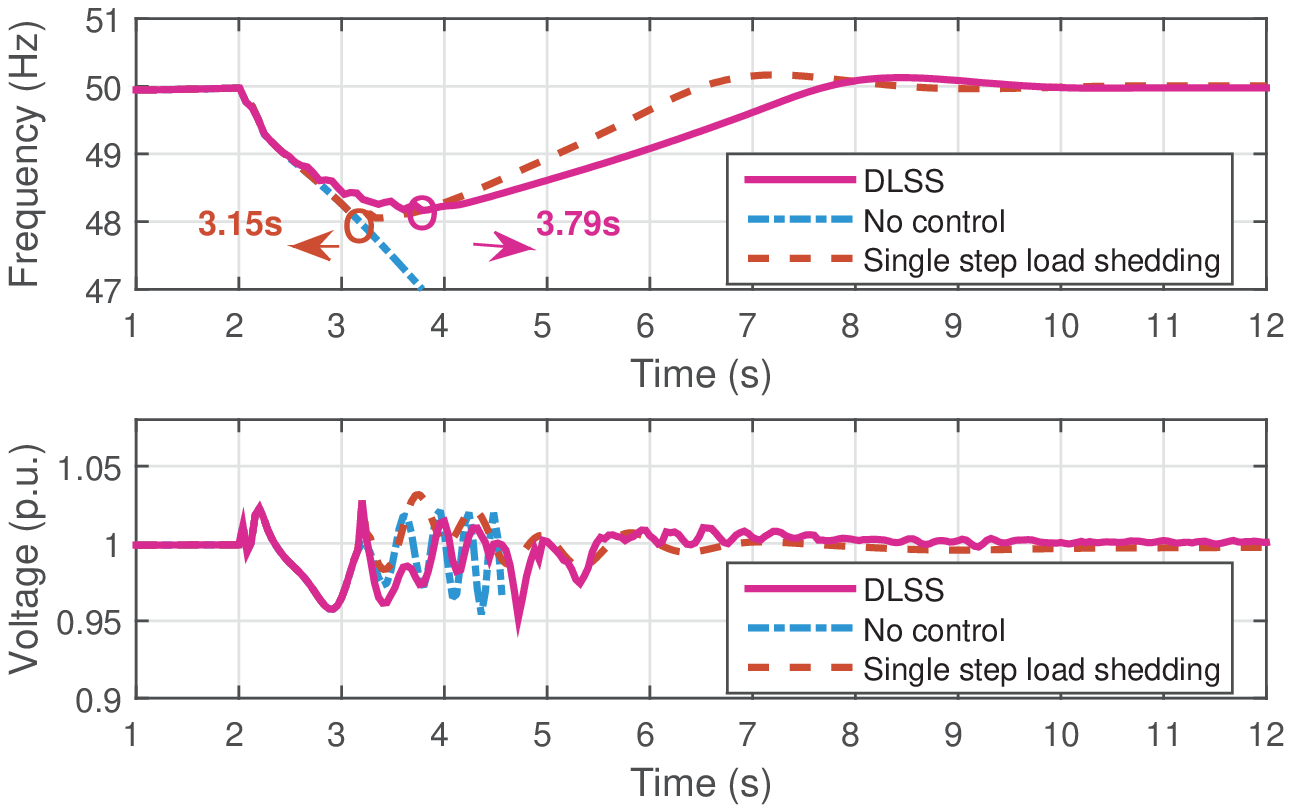}}
\subfigure[]{\label{fig:case2_load_shedding_amout_2}
\includegraphics[width = 0.48 \textwidth]{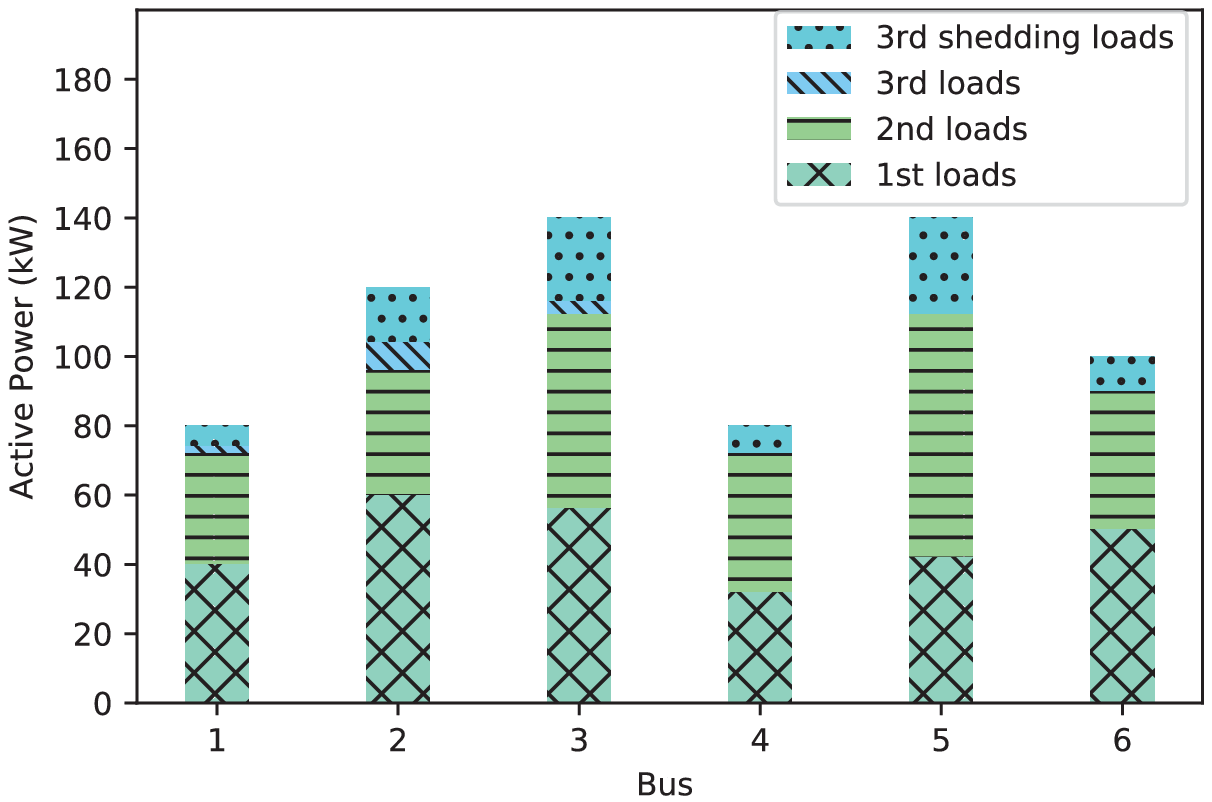}}
\caption{{ (a) Global information discovery. (b) Coordination error comparison. (c) Load shedding at each bus. }}
\label{fig:case2_results}
\end{figure}

When the GID is converged, DLSS method is carried out after the time delay $t_{ad}$ based on the estimated global information. In this simulation, we set $\Delta t$=80 ms, $t_{ad}=0$ms and $\Delta \tilde P = 120$kW. Meanwhile, SG generates real power to compensate the deficiency.
The utilisation levels of all the agents are asymptotically converged, and the objective value converges to 0, which demonstrates that our solution retrieves the power balance.

We can observe from Fig. \ref{fig:case2_frequency_response} that, the system frequency drops close to 48.00 Hz, and then gradually recovers to the rated value 50 Hz. The initial estimated ROCOF equals to 1.75Hz/s. Without load shedding, the time that the frequency drops to 47Hz is approximately 1.8s.
The single step load shedding method executed the operation when the frequency dropped to 48Hz.
Due to the gradual load shedding, the ROCOF $df/dt$ is reduced. Thus, there is more time for power compensation by our solution than other two solutions.
In the process of load shedding by the proposed method, the voltage fluctuates.
The voltage response of SG in Fig. \ref{fig:case2_frequency_response} shows that the proposed DLSS method will not cause under voltage during the process of the whole control.
Additionally, the bars filled with dots and hatched lines shown in Fig. \ref{fig:case2_load_shedding_amout_2} indicate the final load-shedding amount.
We can observe that the first-grade and second-grade loads remain unchanged, part of the third-grade loads are disconnected.
In this case, it is evident that the DLSS method can implement stable load shedding.
Simulation results demonstrate the effectiveness of the proposed DLSS scheme to maintain frequency stability during a large disturbance.
}

{
\subsubsection{Time delay analysis}
Based on the analysis in subsection \ref{subsec:para_setting}, it is known that $t_{gi}$ can be considered as a part of the time delay in the ROCOF relay. Due to $t_{gi}$ cannot be directly adjusted, we test the solution performance at different time delays $t_{ad}$.
The total time delay is calculated by $t_d = t_{gi} + t_{ad}$. The simulation results is shown in Table \ref{tab:case2_shedding_amount_time_delay}. Four different power deficit is considered, which is conducted with different power generations of the WT in bus 5. The step size $\tau_k$ is set to $ 1/[ 15\check P_{L_i}^{2}(w_{G}^2+ C_\lambda)] $.

\begin{table}[thbp]
\centering
\caption{{Load-shedding amount with different time delays.}}
\label{tab:case2_shedding_amount_time_delay}
\begin{tabular}{ccllll}
\hline
\multirow{2}{*}{\begin{tabular}[c]{@{}c@{}}Power deficit \\ $\Delta \tilde P$ (kW) \end{tabular}} & \multicolumn{5}{c}{ $t_{ad}$ (ms) } \\
               & \multicolumn{1}{l}{\multirow{1}{*}{100}} & \multirow{1}{*}{200} & \multirow{1}{*}{400}   & \multirow{1}{*}{600} & \multirow{1}{*}{800} \\ \hline
90             &  40.0   & 42.0   & 52.0   &  60.0  & 70.0     \\ \hline
110            &  70.0   & 76.0   & 84.0   &  94.0  & 102.0    \\ \hline
130            &  102.0  & 106.0  & 116.0  &  124.0 & 130.0    \\ \hline
150            &  128.0  & 134.0  & 142.0  &  150.0 & 150.0    \\ \hline
\end{tabular}
\end{table}

From the results, it can be known that our solution with the smaller delay gets more power compensation. The larger time delay causes a longer response time for the disconnection operation. Consequently, the frequency drops over a longer time. In the worst case, the load shedding process in (\ref{eqn:safety_threshole_shedding}) operates before the DLSS method executes. Thus, the load-shedding amount is similar to the initial estimated power deficit. For example, when $\Delta \tilde P = 150$kW, the load-shedding amounts equal to the power deficit at the time delay $600$ms and $800$ms. In this scenario, the proposed solution has the same performance compared with the conventional single step load shedding scheme. In other words, the power compensation cannot be realised. Consequently, the consumer' experience cannot be improved by reducing the load-shedding amount. Therefore, if $t_{gi}$ is within a reasonable range, we do not need to add another time delay $t_{ad}$.}

\section{Conclusion}
\label{sec:conclusion}
In this paper, a distributed load shedding solution is proposed to shed loads gradually considering the participation of smart homes/buildings. First, the DLSS method is proposed to alleviate the rate of frequency drop. Consequently, the time of frequency to be unsafe is prolonged. Thus, the generators have more time to compensate power deficiency for reducing the load-shedding amount.
Second, an MMST protocol is developed to reduce response time and enhance the reliability of the DLSS method. The simulation results demonstrate that the proposed load shedding solution can maintain the stability of the system frequency and reduce the load-shedding amount.
The future work will concern this issue in the microgrid integrated with energy storage system.

\section*{Acknowledgment}
This work was supported by National Key Research and Development Program of China (2016YFB090190),
National Natural Science Foundation of China (61573245, 61174127, 61521063, and 61633017). This work was also partially supported by Shanghai Rising-Star Program under Grant 15QA1402300 and Shanghai Municipal Commission of Economy and Informatization under SH-CXY-2016-003.

The authors would like to thank the anonymous reviewers for their professional and valuable comments, which have led to the improved version.

\appendix
\label{sec:appdendix}
%



The major steps for proving Theorem 1 is presented here. One key theorem and lemma that used in proof are presented first. The first theorem is Saddle-Point Theorem  \cite{boyd2004convex}.

\begin{theorem}
\label{theorem:theorem_saddle}
The point $(\bm u^{*}, \lambda^{*})$ is primal-dual solution pair of problem (\ref{eqn:lagrange_function}) if and only if there holds
\begin{equation}
\begin{aligned}
\mathcal{L} (\bm u^{*}, \lambda) \leqslant \mathcal{L} (\bm u^{*}, \lambda^{*}) \leqslant \mathcal{L} (\bm u, \lambda^{*}).
\end{aligned}
\end{equation}
\end{theorem}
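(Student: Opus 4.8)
The plan is to prove the two implications of the equivalence separately, exploiting the additive structure $\mathcal{L}(\bm u, \lambda) = F(\bm u) + \lambda J(\bm u)$ of (\ref{eqn:lagrange_function}), in which the constraint (\ref{eqn:constraint}) is written as $J(\bm u) \leqslant 0$ and the dual variable satisfies $\lambda \geqslant 0$. For the ``if'' direction I would assume the saddle-point inequality holds for all $\bm u \in \mathcal U$ and all $\lambda \geqslant 0$, and first extract feasibility and complementary slackness from the left inequality. Writing $\mathcal{L}(\bm u^{*}, \lambda) \leqslant \mathcal{L}(\bm u^{*}, \lambda^{*})$ as $(\lambda - \lambda^{*}) J(\bm u^{*}) \leqslant 0$ for every $\lambda \geqslant 0$: letting $\lambda \to +\infty$ forces $J(\bm u^{*}) \leqslant 0$, so $\bm u^{*}$ is primal feasible; setting $\lambda = 0$ gives $\lambda^{*} J(\bm u^{*}) \geqslant 0$, which combined with $\lambda^{*} \geqslant 0$ and $J(\bm u^{*}) \leqslant 0$ yields the complementary-slackness identity $\lambda^{*} J(\bm u^{*}) = 0$. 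I would then feed this into the right inequality $F(\bm u^{*}) = F(\bm u^{*}) + \lambda^{*} J(\bm u^{*}) \leqslant F(\bm u) + \lambda^{*} J(\bm u)$; restricting to feasible $\bm u$ (where $\lambda^{*} J(\bm u) \leqslant 0$) gives $F(\bm u^{*}) \leqslant F(\bm u)$, so $\bm u^{*}$ is primal optimal, while the right inequality itself shows $\bm u^{*}$ minimises $\mathcal{L}(\cdot, \lambda^{*})$, whence the dual value $\min_{\bm u}\mathcal{L}(\bm u, \lambda^{*}) = \mathcal{L}(\bm u^{*}, \lambda^{*}) = F(\bm u^{*})$ matches the primal optimum, giving dual optimality with zero duality gap.

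Conversely, for the ``only if'' direction I would start from a primal-dual optimal pair with zero duality gap, which is available here because problem (\ref{eqn:objective_constraint}) is convex and admits a strictly feasible point, so Slater's condition holds and strong duality applies. From the chain $F(\bm u^{*}) = \min_{\bm u}\mathcal{L}(\bm u, \lambda^{*}) \leqslant \mathcal{L}(\bm u^{*}, \lambda^{*}) = F(\bm u^{*}) + \lambda^{*} J(\bm u^{*})$ I obtain $\lambda^{*} J(\bm u^{*}) \geqslant 0$; primal feasibility together with $\lambda^{*} \geqslant 0$ forces this to be an equality, recovering complementary slackness and simultaneously collapsing the chain so that $\bm u^{*}$ attains the minimum of $\mathcal{L}(\cdot, \lambda^{*})$, which is exactly the right inequality. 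The left inequality then follows immediately: for any $\lambda \geqslant 0$, $\mathcal{L}(\bm u^{*}, \lambda) = F(\bm u^{*}) + \lambda J(\bm u^{*}) \leqslant F(\bm u^{*}) = \mathcal{L}(\bm u^{*}, \lambda^{*})$, using $\lambda J(\bm u^{*}) \leqslant 0$ and $\lambda^{*} J(\bm u^{*}) = 0$.

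The hard part will be the ``only if'' direction, since it hinges on the duality gap being zero rather than on mere primal and dual optimality; I would be careful to justify strong duality explicitly from convexity plus Slater's condition rather than assume it implicitly, because this is the single place where the equivalence can fail absent a constraint qualification. The ``if'' direction, by contrast, is purely algebraic and requires no convexity at all — the complementary-slackness identity and the feasibility of $\bm u^{*}$ fall out directly from the two halves of the saddle-point inequality, so I expect it to require only the routine manipulations sketched above.
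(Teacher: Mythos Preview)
Your proof proposal is correct and follows the standard textbook argument for the saddle-point theorem. However, the paper does not actually prove this statement: it is stated in the Appendix as a known auxiliary result and attributed directly to Boyd and Vandenberghe's \emph{Convex Optimization} \cite{boyd2004convex}, with no proof given. So there is nothing in the paper's own treatment to compare your argument against; what you have written is essentially the classical derivation one finds in the cited reference, and your care in invoking Slater's condition to justify strong duality for the ``only if'' direction is appropriate and slightly more explicit than many textbook presentations.
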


The lemma 11 of chapter 2.2 in \cite{polyak1987introduction} is used, which is described as follow.

\begin{lemma}
\label{lemma:lemma_positive}
If $b_k$, $d_k$ and $e_k$ are non-negative sequences and satisfy the condition as follow:
\begin{equation}
\begin{aligned}
& \sum_{k=1}^{\infty} c_k< \infty \\
& b_k < b_{k-1} - e_{k-1} + c_{k-1},
\end{aligned}
\end{equation}
the sequence ${b_k}$ converges and $\sum_{k=1}^{\infty} e_k < \infty$.
\end{lemma}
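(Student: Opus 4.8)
The statement is the deterministic version of the almost-supermartingale (Robbins--Siegmund) convergence lemma, and the plan is to prove it directly by constructing an auxiliary monotone sequence that absorbs the summable perturbation. First I would tidy the hypothesis: only $b_k$, $c_k$, $e_k$ occur, so I read the ``$d_k$'' as a typo for ``$c_k$'' and take $b_k,c_k,e_k\ge 0$, $\sum_{k\ge 1}c_k<\infty$, and $b_k\le b_{k-1}-e_{k-1}+c_{k-1}$ (the strict sign is inessential, only ``$\le$'' is used). After an index shift the recursion reads $b_{k+1}\le b_k-e_k+c_k$ for all $k\ge 1$.

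The summability $\sum_k e_k<\infty$ follows by telescoping: summing the recursion over $k=1,\dots,K$ gives $b_{K+1}\le b_1-\sum_{k=1}^{K}e_k+\sum_{k=1}^{K}c_k$, and since $b_{K+1}\ge 0$ this rearranges to $\sum_{k=1}^{K}e_k\le b_1+\sum_{k=1}^{\infty}c_k$, a bound independent of $K$; letting $K\to\infty$ proves the claim.

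For convergence of $\{b_k\}$, I would use that $\sum_j c_j<\infty$ makes the tail $r_k:=\sum_{j\ge k}c_j$ finite with $r_k\to 0$. Put $B_k:=b_k+r_k\ge 0$; the recursion gives $B_{k+1}=b_{k+1}+r_{k+1}\le b_k-e_k+c_k+r_{k+1}=B_k-e_k\le B_k$, so $\{B_k\}$ is non-increasing and bounded below, hence converges to some $B^{*}\ge 0$; since $r_k\to 0$, $b_k=B_k-r_k\to B^{*}$, which is the stated convergence.

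There is no genuine obstacle here beyond recognizing the right auxiliary sequence $B_k=b_k+\sum_{j\ge k}c_j$ --- the raw $b_k$ need not be monotone --- and the only hypothesis doing any work is $\sum_k c_k<\infty$, which is exactly what makes $B_k$ well defined and what delivers both conclusions simultaneously. In the remainder of the appendix this lemma is the device that turns the per-iteration ``descent plus bounded error'' inequality for the distributed primal--dual iterates into convergence of $\bm{u}^{(k)}$ and $\lambda_i^{(k)}$, with the Saddle-Point Theorem (Theorem~\ref{theorem:theorem_saddle}) then identifying the limit as an optimal primal--dual pair.
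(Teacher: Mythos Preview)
Your argument is correct and is the standard proof of this deterministic Robbins--Siegmund/Polyak lemma: telescope to get $\sum_k e_k<\infty$, then absorb the summable perturbation into the tail $r_k=\sum_{j\ge k}c_j$ so that $B_k=b_k+r_k$ is monotone. The paper, however, does not supply its own proof of this lemma at all---it is simply quoted from Polyak~\cite{polyak1987introduction} (Lemma~11, Chapter~2.2) and then invoked in the appendix, so there is nothing to compare against; your write-up is essentially the proof one finds in that reference.
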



The transition matrix $A$ satisfies that $\sum_{j=1}^{N} a_{ij} = 1$ for all $i,k$ and $\sum_{i=1}^{N} a_{ij} = 1$ for all $j$, which is a doubly stochastic matrix. There exists a scalar $0 < \gamma < 1$ such that $a_{ii} > \gamma$ for all $i$ and $a_{ij} > \gamma$ if $a_{ij} > 0$. The communication graph can be set as similar as power network, so it is strongly connected.

In each iteration of global variable estimation, since $\tau_k$ is positive and non-increasing sequence, it holds that
\begin{equation}
\begin{aligned}
\label{eqn:aulxilary_variable_convergence}
& \sum_{k=1}^{\infty} \tau_k \Big \Vert \tilde W_i^{(k)} - \hat {W}^{(k-1)} \Big \Vert < \infty,
\lim_{k \rightarrow \infty} \Big \Vert \tilde W_i^{(k)} - \hat {W}^{(k-1)}  \Big \Vert =0, \\
& \sum_{k=1}^{\infty} \tau_k \Big \Vert \tilde D_i^{(k)} - \hat {D}^{(k-1)} \Big \Vert < \infty,
\lim_{k \rightarrow \infty} \Big \Vert \tilde D_i^{(k)} - \hat {D}^{(k-1)}  \Big \Vert =0, \\
& \sum_{k=1}^{\infty} \tau_k \Big \Vert \tilde \lambda_i^{(k)} - \hat {\lambda}^{(k-1)} \Big \Vert < \infty,
\lim_{k \rightarrow \infty} \Big \Vert \tilde \lambda_i^{(k)} - \hat {\lambda}^{(k-1)}  \Big \Vert =0,
\end{aligned}
\end{equation}
where
\begin{equation}
\label{eqn:central_variable}
\begin{aligned}
& \hat {W}^{(k)} = \dfrac{1}{N}\sum_{j=1}^{N} {P_{W_j}}^{(k)},
 \hat {D}^{(k)} = \dfrac{1}{N}\sum_{j=1}^{N} \Delta{P_{j}}^{(k)}, \\
& \hat {\lambda}^{(k)} = \dfrac{1}{N}\sum_{j=1}^{N} {{\lambda_j}}^{(k)}.
\end{aligned}
\end{equation}

The local $u_i$ in (\ref{eqn:u_varible_update}) will achieve consensus on the value of $\hat u_i^{(k)}$ asymptotically. We define $\hat u_i^{(k)}$ as
\begin{equation}
\begin{aligned}
\hat u_i^{(k)} & = \left( {u_i}^{(k-1)} -  \tau_k \mathcal{L}_{u_i} \left( \bm{u}^{(k-1)}, \hat {\lambda}_i^{(k-1)} \right) \right)^+ \\
& = \left( {u_i}^{(k-1)} -  2 \tau_k P_{L_{i}}^{\max} \left( \hat {\lambda}_i^{(k-1)} N \tilde D_i^{(k)} - w_{{m+1}} \left( P_{W_t} - P_{W_{\Delta}} - N \tilde W_i^{(k)} \right)  \right) \right)^+,
\end{aligned}
\end{equation}

Based on (\ref{eqn:J_condition2}) and (\ref{eqn:F_condition2}), we can know that
\begin{equation}
\label{eqn:alpha_convergence}
\begin{aligned}
& \sum_{i=1}^{N} \Big \Vert  {{u}}_{i}^{(k)} - {{u}}_{i}^{} \Big \Vert ^2 \\
= & \sum_{i=1}^{N}  \Big \Vert \left( u_i^{(k-1)} - 2 \tau_k P_{L_{i}}^{\max} \left( \tilde {\lambda}_i^{(k)} N \tilde D_i^{(k)} - w_{{m+1}} \left( P_{W_t} - P_{W_{\Delta}} - N \tilde W_i^{(k)} \right) \right) \right)^{+} - u_i^{}  \Big \Vert ^2 \\
\leqslant & \sum_{i=1}^{N}  \Big \Vert u_i^{(k-1)} - u_i^{}  \Big \Vert ^2 + 4 \tau_k^2 N^2 \check P_{L_i} \left( \Delta \tilde P + C_{\lambda} w_{G} P_{W_\Delta} \right)^2  \\
& - \sum_{i=1}^{N}  2 \tau_k  \check P_{L_i} \left( u_i^{(k-1)} - u_i^{}  \right) \left[ \tilde {\lambda}_i^{(k)} N \tilde D_i^{(k)} - w_{{m+1}} \left( P_{W_t} - P_{W_{\Delta}} - N \tilde W_i^{(k)} \right) \right] \\
\end{aligned}
\end{equation}

The last term in (\ref{eqn:alpha_convergence}) can be bounded as
\begin{equation}
\label{eqn:last_term_of_u}
\begin{aligned}
& - \sum_{i=1}^{N}  2 \tau_k  \check P_{L_i} \left( u_i^{(k-1)} - u_i^{}  \right) \left[ \tilde {\lambda}_i^{(k)} N \tilde D_i^{(k)} - w_{{m+1}} \left( P_{W_t} - P_{W_{\Delta}} - N \tilde W_i^{(k)} \right) \right] \\
= & - 2 \tau_k \sum_{i=1}^{N}  \left( u_i^{(k-1)} - u_i^{}  \right) \left[ \hat {\lambda}_i^{(k-1)} N \hat D_i^{(k-1)} - w_{{m+1}} \left( P_{W_t} - P_{W_{\Delta}} - N \hat W_i^{(k-1)} \right) \right] \\
& - 2 \tau_k N \tilde {\lambda}_i^{(k)} \sum_{i=1}^{N}  \left( u_i^{(k-1)} - u_i^{} \right) \left( \tilde D_i^{(k)}  - \hat D_i^{(k-1)}  \right) \\
& - 2 \tau_k N \hat D_i^{(k)} \sum_{i=1}^{N}  \left( u_i^{(k-1)} - u_i^{}  \right) \left( \tilde {\lambda}_i^{(k)}  - \hat {\lambda}_i^{(k-1)} \right) \\
& - 2 \tau_k N w_{{m+1}} \sum_{i=1}^{N}  \left( u_i^{(k-1)} - u_i^{}  \right) \left( \tilde W_i^{(k)} - \hat W_i^{(k-1)} \right) \\
\leqslant & - 2 \tau_k \sum_{i=1}^{N}  \left( u_i^{(k-1)} - u_i^{}  \right) \mathcal{L}_{u_i} \left( \bm{u}^{(k-1)}, \hat {\lambda}_i^{(k-1)} \right) + 2 \tau_k N^2 D_{\lambda} \left \Vert \tilde D_i^{(k)} - \hat D_i^{(k-1)} \right \Vert  \\
& + 2 \tau_k N \Delta {\tilde P} \left \Vert \tilde {\lambda}_i^{(k)} - \hat {\lambda}_i^{(k-1)} \right \Vert + 2 \tau_k N^2 w_{{G}} \left \Vert \tilde W_i^{(k)} - \hat W_i^{(k-1)} \right \Vert \\
\leqslant &  2 \tau_k   \left( \Big \Vert \mathcal{L} \left( \bm{u}^{(k-1)}, \hat {\lambda}_i^{(k-1)} \right) \Big \Vert^2  + \Big \Vert \mathcal{L} \left( \bm{u}^{}, \hat {\lambda}_i^{(k-1)} \right) \Big \Vert^2 \right) + 2 \tau_k N^2 D_{\lambda} \left \Vert \tilde D_i^{(k)} - \hat D_i^{(k-1)} \right \Vert  \\
& + 2 \tau_k N \Delta {\tilde P} \left \Vert \tilde {\lambda}_i^{(k)} - \hat {\lambda}_i^{(k-1)} \right \Vert + 2 \tau_k N^2 w_{{G}} \left \Vert \tilde W_i^{(k)} - \hat W_i^{(k-1)} \right \Vert \\
\end{aligned}
\end{equation}

\begin{equation}
\label{eqn:lambda_convergence}
\begin{aligned}
& \sum_{i=1}^{N} \Big \Vert {\lambda}_{i}^{(k)} - {\lambda}^{} \Big \Vert ^2 \\
= & \sum_{i=1}^{N} \Big \Vert \left( \tilde {\lambda}_i^{(k)} + \tau_k \Big( \left(N \tilde D_i^{(k)} \right)^2 -\varepsilon \Big) \right)  -  {\lambda}^{} \Big \Vert ^2 \\
\leqslant &  \sum_{i=1}^{N} \Big \Vert { \tilde \lambda}_i^{(k)} - \lambda^{} \Big \Vert ^2 + \tau_k^2 {N} \Delta {\tilde P}^4 + \sum_{i=1}^{N}  2 \tau_k \left( {\tilde \lambda}_i^{(k)} - \lambda^{} \right) \bigg( \left(N \tilde D_i^{(k)} \right)^2 -\varepsilon \bigg) \\
\end{aligned}
\end{equation}

The last term in (\ref{eqn:alpha_convergence}) can be bounded as
\begin{equation}
\label{eqn:beta_convergence}
\begin{aligned}
&  \sum_{i=1}^{N}  2 \tau_k \left( {\tilde \lambda}_i^{(k)} - \lambda^{} \right) \bigg( \left(N \tilde D_i^{(k)} \right)^2 -\varepsilon \bigg) \\
= &  \sum_{i=1}^{N}  2 \tau_k \left( {\hat \lambda}_i^{(k-1)} - \lambda^{} + {\tilde \lambda}_i^{(k)} -{\hat \lambda}_i^{(k-1)} \right) \bigg( \left(N \tilde D_i^{(k)} \right)^2 -\varepsilon \bigg) \\
= &  \sum_{i=1}^{N}  2 \tau_k \left( {\hat \lambda}_i^{(k-1)} - \lambda^{} \right) \bigg( \left(N \hat D_i^{(k-1)} \right)^2 -\varepsilon \bigg) + \sum_{i=1}^{N}  2 \tau_k \left( {\hat \lambda}_i^{(k-1)} - \lambda^{} \right) \bigg( \left(N \tilde D_i^{(k)} \right)^2 - \left(N \hat D_i^{(k-1)} \right)^2 \bigg) \\
& + \sum_{i=1}^{N}  2 \tau_k \left( {\tilde \lambda}_i^{(k)} - {\hat \lambda}_i^{(k-1)} \right) \bigg( \left(N \tilde D_i^{(k)} \right)^2 -\varepsilon \bigg) \\
\leqslant &  \sum_{i=1}^{N}  2 \tau_k \left( {\hat \lambda}_i^{(k-1)} - \lambda^{} \right) \bigg( \left(N \hat D_i^{(k-1)} \right)^2 -\varepsilon \bigg) + \sum_{i=1}^{N}  2 \tau_k D_{\lambda} N^2 \left \Vert \tilde D_i^{(k)} - \hat D_i^{(k-1)} \right \Vert \left \Vert \tilde D_i^{(k)} + \hat D_i^{(k-1)} \right \Vert \\
& + 2 \tau_k \Delta {\tilde P}^2 \sum_{i=1}^{N}  \left \Vert \tilde {\lambda}_i^{(k)} - \hat \lambda^{(k-1)} \right \Vert \\
\leqslant & 2 \tau_k  \sum_{i=1}^{N} \left( \Big \Vert \mathcal{L} \left( \hat {\bm{u}}^{(k-1)}, \hat {\lambda}_i^{(k-1)} \right)  \Big \Vert^2 + \Big \Vert \mathcal{L} \left( \hat {\bm{u}}^{(k-1)},  {\lambda}^{} \right) \Big \Vert^2 \right) + 4 \tau_k D_{\lambda} N \Delta \tilde P \left \Vert \tilde D_i^{(k)} - \hat D_i^{(k-1)} \right \Vert \\
& + 2 \tau_k N \Delta {\tilde P}^2 \left \Vert \tilde {\lambda}_i^{(k)} - \hat \lambda^{(k-1)} \right \Vert
\end{aligned}
\end{equation}
where $\hat {\lambda}^{(k-1)} \leqslant C_{\lambda} $, $ \lVert \tilde D_i^{(k)} \rVert \leqslant \Delta \tilde P/N $, and $ \lVert \hat D_i^{(k-1)} \rVert \leqslant \Delta \tilde P/N $.
Thus, from the result of (\ref{eqn:aulxilary_variable_convergence}), $ {\bm{u}}^{(k)}$ and $\lambda^{(k)}$ converge to the common points $\hat {\bm{u}}^{(k)}$ and $\hat \lambda^{(k)}$, respectively.

We let $\bm{u}^{*}$ and ${\lambda }^{*}$ represent the saddle point. By combining (\ref{eqn:alpha_convergence})-(\ref{eqn:beta_convergence}), we obtain the inequality as follow
\begin{equation}
\label{eqn:total_primal_dual1}
\begin{aligned}
& \sum _{i=1}^{N} \Big \Vert {u}_i^{(k)} - {u}_i^{*} \Big \Vert^{2} \leqslant \sum _{i=1}^{N} \Big \Vert {u}_i^{(k-1)}-{u}_i^{*} \Big \Vert^{2} + \tilde c_k + 2 \tau_k  \sum _{i=1}^{N}   \left( \Big \Vert \mathcal{L} \left( \bm{u}^{(k-1)}, \hat {\lambda}_i^{(k-1)} \right) \Big \Vert^2  + \Big \Vert \mathcal{L} \left( \bm{u}^{}, \hat {\lambda}_i^{(k-1)} \right) \Big \Vert^2 \right),
\end{aligned}
\end{equation}
where
\begin{equation}
\begin{aligned}
\tilde c_k =  & \  4 \tau_k^2 N^2 \check P_{L_i}^2 \left( \Delta \tilde P + C_{\lambda} w_{G} P_{W_\Delta} \right)^2 + 2 \tau_k N^2 D_{\lambda} \left \Vert \tilde D_i^{(k)} - \hat D_i^{(k-1)} \right \Vert  \\
& + 2 \tau_k N \Delta {\tilde P} \left \Vert \tilde {\lambda}_i^{(k)} - \hat {\lambda}_i^{(k-1)} \right \Vert + 2 \tau_k N^2 w_{{G}} \left \Vert \tilde W_i^{(k)} - \hat W_i^{(k-1)} \right \Vert \\
\end{aligned}
\end{equation}

\begin{equation}
\label{eqn:total_primal_dual2}
\begin{aligned}
& \sum _{i=1}^{N} \Big \Vert {\lambda_i }^{(k)} - {\lambda }^{*} \Big \Vert^2 \leqslant \sum _{i=1}^{N} \Big \Vert {\lambda_i }^{(k-1)} - {\lambda }^{*} \Big \Vert^2 + \tilde c_k + 2 \tau_k  \sum_{i=1}^{N} \left( \Big \Vert \mathcal{L} \left( \hat {\bm{u}}^{(k-1)}, \hat {\lambda}_i^{(k-1)} \right)  \Big \Vert^{2} + \Big \Vert \mathcal{L} \left( \hat {\bm{u}}^{(k-1)},  {\lambda}^{} \right) \Big \Vert^{2} \right) ,
\end{aligned}
\end{equation}
where
\begin{equation}
\begin{aligned}
\tilde c_k =  & \ \tau_k^2 {N} \Delta {\tilde P}^4 + 4 \tau_k D_{\lambda} N \Delta \tilde P \left \Vert \tilde D_i^{(k)} - \hat D_i^{(k-1)} \right \Vert \\
& + 2 \tau_k N \Delta {\tilde P}^2 \left \Vert \tilde {\lambda}_i^{(k)} - \hat \lambda^{(k-1)} \right \Vert
\end{aligned}
\end{equation}

Since the $\lim_{k \rightarrow + \infty } \tau_k = 0 $, the last two terms on the right side of (\ref{eqn:total_primal_dual1}) and (\ref{eqn:total_primal_dual2}) converge to zeros as $k\rightarrow \infty$. That can verify that $ \lim_{k \rightarrow + \infty } \sum _{i=1}^{N} \Big \Vert {u}_i^{(k)} - {u}_i^{*} \Big \Vert^{2} $ exists for any $u \in \mathcal U$. It follows from \cite{distributed2012zhu}  that  $ \lim_{k \rightarrow + \infty } \sum _{i=1}^{N} \Big \Vert {u}_i^{(k)} - {u}_i^{*} \Big \Vert^{2} = 0 $. Similarly, it can obtain that $  \lim_{k \rightarrow + \infty } \sum _{i=1}^{N} \Big \Vert {\lambda}_i^{(k)} - {\lambda}_i^{*} \Big \Vert^{2} = 0 $.
Finally, the sequence $ ( \lVert \bm{u}^{(k)}-\bm{u}^{*} \rVert^{2} + \sum _{i=1}^{N} \lVert {\lambda }^{(k)} - {\lambda }^{*} \rVert^2  )$ converges for the saddle point $(\bm{u}^{*}, \lambda^*)$.

In the process of load shedding, the generation compensation is employed to reduce the power deficit $\Delta \tilde P$. Obviously it will improve the convergence of load shedding process. After the generation compensation, the saddle point is changed to $ (\bar {\bm{u}}^{*}, \bar{\lambda}^*)$. The saddle point satisfies that $ \lVert \bm{u}^{(0)} - \bar {\bm{u}}^{*} \rVert < \lVert \bm{u}^{(0)} - {\bm{u}}^{*} \rVert $ and $ \lVert \lambda^{(0)} - \bar {\lambda}^{*} \rVert < \lVert \lambda^{(0)} - {\lambda}^{*} \rVert $. It can be obtained that
\begin{subequations}
\begin{align}
\label{eqn:compensation_optimal_solution}
& \left( \lVert \bm{u}^{(k)}- \bar{\bm{u}}^{*} \rVert^{2} + \sum _{i=1}^{N} \lVert {\lambda }^{(k)} - \bar{\lambda }^{*} \rVert^2  \right) \\
& \quad < \left( \lVert \bm{u}^{(k)}-\bm{u}^{*} \rVert^{2} + \sum _{i=1}^{N} \lVert {\lambda }^{(k)} - {\lambda }^{*} \rVert^2  \right) .
\end{align}
\end{subequations}

It can be seen that (\ref{eqn:compensation_optimal_solution}) also satisfies the Lemma 1. Thus, $\bm{u}^{(k)} $ also converges to the $\bar {\bm{u}}^{*}$, and the supply-demand balance of the microgrid system can be achieved by the proposed DLSS method.


\end{document}